\tikzset{%
  highlight/.style={rectangle,rounded corners,fill=red!15,draw,fill opacity=0.3,thick,inner sep=0pt}
}
\tikzset{%
  highlight1/.style={rectangle,rounded corners,fill=blue!15,draw,fill opacity=0.3,thick,inner sep=0pt}
}
\theoremstyle{plain}
\newtheorem{thm}{Theorem}[section]
\newtheorem{cor}[thm]{Corollary}
\newtheorem{lem}[thm]{Lemma}
\newtheorem{prop}[thm]{Proposition}
\theoremstyle{definition}
\newtheorem{defn}[thm]{Definition}
\newtheorem{rem}[thm]{Remark}
\newtheorem{ex}[thm]{Example}
\numberwithin{equation}{section}
\newcommand{\F}{{\mathbb F}}
\newcommand{\Tr}{{\rm Tr}}
\newcommand{\Span}{{\rm Span  }}
\newcommand{\lcm}{{\rm lcm }}
\begin{document}

\title[QCCD Codes]{Quasi-Cyclic Complementary Dual Codes}

\author{Cem G\"uner\.{i}}
\address{Sabanc{\i} University, FENS, 34956  \.{I}stanbul, Turkey}
\email{guneri@sabanciuniv.edu}

\author{Buket \"Ozkaya}

\address{ Dept. Comelec, T\'{e}l\'{e}com ParisTech, 46 rue Barrault, 75634 Paris Cedex 13, France}
\email{buket.ozkaya@telecom-paristech.fr}

\author{Patrick Sol\'e}
\address{ CNRS/LTCI, T\'{e}l\'{e}com ParisTech, 46 rue Barrault, 75634 Paris Cedex 13, France}
\email{sole@enst.fr}
\vspace{0.4cm}


\keywords{Quasi-cyclic code, LCD code, dual code}
\begin{abstract}
 LCD codes are linear codes that intersect with their dual trivially. Quasi-cyclic codes that are LCD are characterized and studied by using their concatenated structure.
 Some asymptotic results are derived. Hermitian LCD codes are introduced to that end and their cyclic subclass is characterized.
 Constructions of QCCD codes from codes over larger alphabets are given.
\end{abstract}
\maketitle

\section{Introduction}
Linear complementary codes (LCD) are linear codes that intersect with their dual trivially. This concept was introduced by Massey, following an Information Theoretic
motivation \cite{M}. It was rediscovered more recently in \cite{CG} from Boolean masking considerations, of interest in embarked cryptography.
The two main results so far in the theory of LCD codes is the characterization of the cyclic subclass \cite{M1} and the asymptotic goodness \cite{S2}. In the present
work we consider the more general subclass of quasi-cyclic complementary dual codes (QCCD). This was partially studied by \cite{EY} who put special attention to
one-generator family. We use the duality driven Chinese Remainder Theorem (CRT) decomposition championed in \cite{LS1,LS2} and more recently in \cite{GO2,GO1,GO}.
Since that decomposition was useful to study self-dual quasi-cyclic codes it is natural to consider it again for studying LCD codes.
While \cite{CG} only considers
binary codes, we have $q-$ary codes which is useful in several ways. In particular we generalize in the case of $q$ a square the cyclic subclass characterization
of \cite{M1}. We also use this extra flexibility for deriving new constructions of LCD codes by base field descent.

The material is organized as follows. Section 2 recalls the CRT set-up, on which Section 3 built to derive its asymptotic results.
Section 4 is dedicated to the Hermitian inner product. Section 5 considers special constructions, in particular from trace orthogonal bases.
\section{Background on Quasi-Cyclic Codes} \label{background}
In the whole paper $q$ denotes a prime power and $\F_q$ the finite field of that order.
A linear code over $\F_q$ is called a {\bf quasi-cyclic} (QC) code of index $\ell$ if it is closed under shifting codewords by $\ell$ units, and $\ell$ is the smallest positive integer with this property. So, cyclic codes amount to the special case $\ell=1$. It is well-known that the index of a QC code divides its length. So, we let $C$ be a QC code of length $m\ell$, index $\ell$ over $\F_q$. If we let $R:=\F_q[x]/\langle x^m-1 \rangle$, then the code $C$ can be viewed as an $R$-module in $R^\ell$ (\cite[Lemma 3.1]{LS1}).

As in \cite{LS1}, assume the following factorization into irreducible polynomials in $\F_q[x]$
\begin{equation}\label{factors}
x^m-1=g_1\cdots g_sh_1h_1^*\cdots h_th_t^*,
\end{equation}
where $g_i$'s are self-reciprocal and $h_j^*$ denotes the reciprocal of $h_j$. Let $\xi$ be a primitive $m^{th}$ root of unity over $\F_q$. Assume that $g_i(\xi^{u_i})=0$ and $h_j(\xi^{v_j})=0$ (for all $i,j$). Then we also have $h_j^*(\xi^{-v_j})=0$. By Chinese Remainder Theorem (CRT), $R$ decomposes as
{\small\begin{eqnarray} \label{CRT1}
\left(\bigoplus_{i=1}^s \F_q[x]/\langle g_i \rangle \right) \oplus \left(\bigoplus_{j=1}^t \Bigl( \F_q[x]/\langle h_j \rangle \oplus \F_q[x]/\langle h_j^* \rangle \Bigr) \right) & = & \left( \bigoplus_{i=1}^s \F_q(\xi^{u_i}) \right) \oplus \left( \bigoplus_{j=1}^t \Bigl( \F_q(\xi^{v_j}) \oplus \F_q(\xi^{-v_j}) \Bigr) \right). \nonumber
\end{eqnarray}}Again as in \cite{LS1}, we let $G_i=\F_q[x]/\langle g_i \rangle$, $H_j'=\F_q[x]/\langle h_j \rangle$ and $H_j''=\F_q[x]/\langle h_j^* \rangle$ for simplicity. The map that sends $a(x)\in R$ to the decomposition can be thought of as projections mod each irreducible factor or as follows:
$$a(x) \mapsto \left( \bigoplus_{i=1}^s a(\xi^{u_i}) \right) \oplus \left( \bigoplus_{j=1}^t \Bigl(a(\xi^{v_j}) \oplus a(\xi^{-v_j}) \Bigr) \right).$$
This decomposition naturally extends to $R^\ell$ and then $C\subset R^\ell$ decomposes as
\begin{equation}\label{CRT2}C=\left( \bigoplus_{i=1}^s C_i \right) \oplus \left( \bigoplus_{j=1}^t \Bigl( C_j' \oplus C_j'' \Bigr) \right),\end{equation}
where each component code is a length $\ell$ linear code over the base field ($G_i,H_j'$ or $H_j''$) it is defined (\cite[Section IV]{LS1}). Component
codes $C_i,C_j',C_j''$ are called the {\bf constituents} of $C$.

The constituents can be described in terms of the generators of $C$ (\cite[Lemma 2.1]{GO}). Namely, if $C$ is an $r$-generator QC code with generators
$$\{\bigl(a_{1,1}(x),\ldots ,a_{1,\ell}(x)\bigr),\ldots ,
\bigl(a_{r,1}(x),\ldots ,a_{r,\ell}(x)\bigr)\} \subset R^\ell ,$$
then
\begin{eqnarray} \label{consts}
C_i & = & \Span_{G_i}\bigl\{\bigl(a_{b,1}(\xi^{u_i}),\ldots
,a_{b,\ell}(\xi^{u_i})\bigr): 1\leq b \leq r \bigr\}, \ \mbox{for
$1\leq i \leq s$}, \nonumber \\
C_j' & = &  \Span_{H_j'}\bigl\{\bigl(a_{b,1}(\xi^{v_j}),\ldots
,a_{b,\ell}(\xi^{v_j})\bigr): 1\leq b \leq r \bigr\}, \ \mbox{for
$1\leq j \leq t$}, \\
C_j'' & = &  \Span_{H_j''}\bigl\{\bigl(a_{b,1}(\xi^{-v_j}),\ldots
,a_{b,\ell}(\xi^{-v_j})\bigr): 1\leq b \leq r \bigr\}, \ \mbox{for
$1\leq j \leq t$} .\nonumber
\end{eqnarray}

For $i\in \{1,\ldots ,s\}$, let $\theta_i$ be the generating primitive idempotent for the $q$-ary minimal cyclic code of length $m$, whose check polynomial is $g_i(x)$. This cyclic code $\langle \theta_i \rangle$ is isomorphic to the field $G_i$. Similarly, let $\theta_j'$ and $\theta_j''$ denote the primitive idempotent generators for the minimal cyclic codes which are isomorphic to the fields $H_j'$ and $H_j''$ (for $1\leq j \leq t$). By Jensen's work (\cite{J}), it was shown in \cite{GO} that the QC code $C$ above also has a concatenated decomposition
\begin{equation}\label{CRT4}C=\left( \bigoplus_{i=1}^s \langle \theta_i \rangle \Box
\mathfrak{C}_i \right) \oplus \left( \bigoplus_{j=1}^t \left( \langle \theta_j' \rangle \Box
\mathfrak{C}_j' \right) \oplus \left( \langle \theta_j'' \rangle \Box
\mathfrak{C}_j'' \right) \Bigr) \right),\end{equation}
where the outer codes $\mathfrak{C}_i,\mathfrak{C}_j',\mathfrak{C}_j''$ are length $\ell$ linear codes over $G_i,H_j',H_j''$, respectively,
and where $\Box$ denotes standard concatenation. More importantly, the outer codes and the constituents (in CRT decomposition) are the same (\cite[Theorem 4.1]{GO}): $\mathfrak{C}_i=C_i,\mathfrak{C}_j'=C_j',\mathfrak{C}_j''=C_j''$ for all $i,j$. The converse statement holds as well. Namely, if you start with arbitrary length $\ell$ outer codes (constituents) over the fields $G_i,H_j',H_j''$ and form the concatenation above, the resulting code is a length $m\ell$, index $\ell$ QC code over $\F_q$.

It was shown in \cite{LS2} that for a QC code $C$ with CRT decomposition as in (\ref{CRT2}), the (Euclidean) dual in $\F_q^{m\ell}$ is of the form
\begin{equation}\label{CRT3}
C^\bot=\left( \bigoplus_{i=1}^s C_i^{\bot_h} \right) \oplus \left( \bigoplus_{j=1}^t \Bigl( C_j''^{\bot_e} \oplus C_j'^{\bot_e} \Bigr) \right).
\end{equation}
Here, $\bot_h$ denotes the Hermitian dual on $G_i^\ell=\F_q(\xi^{u_i})^\ell$ (for all $1\leq i \leq s$). Definition of the Hermitian inner product of $\vec{c}=(c_1(\xi^{u_i}),\ldots ,c_\ell(\xi^{u_i})),\vec{d}=(d_1(\xi^{u_i}),\ldots ,d_\ell(\xi^{u_i}))\in \F_q(\xi^{u_i})^\ell$, where $c_b(x),d_b(x)\in R$ for all $1\leq b \leq \ell$, is as follows:
\begin{equation}\label{hermprod}
\langle \vec{c},\vec{d}\rangle := \sum_{b=1}^\ell c_b(\xi^{u_i})d_b(\xi^{-u_i}).
\end{equation}

A minimum distance bound on QC codes was given by Jensen (see \cite[Theorem 4 and page 792]{J}) based on the concatenated description of the code. This bound was later improved by a bound obtained in \cite{GO1} (see \cite{GO} for the proof of improvement), although just Jensen's bound, which is easier to state, will be enough for our purposes in Section \ref{results}.

\begin{thm}\label{jensen}
Let $C$ be a $q$-ary QC code of length $m\ell$, index $\ell$ with
the concatenated structure
$$C=\bigoplus\limits_{t=1}^{g} \langle \theta_{t} \rangle \Box \mathfrak{C}_{t}.$$
Assume that
$d(\mathfrak{C}_{1}) \leq d(\mathfrak{C}_{2}) \leq \cdots \leq
d(\mathfrak{C}_{g})$. Then we have
\begin{equation} \label{jensens concat bound}
d(C)\geq \min\limits_{1\leq e \leq g} \left\{d(\mathfrak{C}_{e})
d\Bigl(\langle \theta_{1} \rangle  \oplus \cdots \oplus \langle
\theta_{e} \rangle \Bigr) \right\}.
\end{equation}
\end{thm}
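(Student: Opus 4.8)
The plan is to lower-bound the Hamming weight of an arbitrary nonzero codeword $c\in C$ by exploiting the concatenated description $C=\bigoplus_{t=1}^{g}\langle \theta_t\rangle\Box\mathfrak{C}_t$. First I would fix, for $c$, its tuple of outer codewords: write $c=\sum_{t=1}^{g}c_t$ with $c_t\in\langle\theta_t\rangle\Box\mathfrak{C}_t$, and let $\mathbf{b}_t=(\beta_{t,1},\ldots,\beta_{t,\ell})\in\mathfrak{C}_t\subseteq G_t^{\ell}$ be the outer word producing $c_t$. Viewing the length $m\ell$ word $c$ as $\ell$ blocks of length $m$, standard concatenation makes the $b$-th block of $c_t$ equal to $\iota_t(\beta_{t,b})$, where $\iota_t\colon G_t\xrightarrow{\sim}\langle\theta_t\rangle$ is the field isomorphism realizing the inner encoding. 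Hence the $b$-th block of $c$ is $\sum_{t=1}^{g}\iota_t(\beta_{t,b})$. Since the $\theta_t$ are pairwise orthogonal idempotents, the sum $\langle\theta_1\rangle+\cdots+\langle\theta_g\rangle$ is an internal direct sum in $\F_q^{m}$, and the projection of the $b$-th block onto its $\langle\theta_t\rangle$-component is exactly $\iota_t(\beta_{t,b})$, which vanishes precisely when $\beta_{t,b}=0$.

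Next I would run a counting argument. Let $e$ be the largest index with $\mathbf{b}_e\neq 0$; such an $e$ exists because $c\neq 0$. Then $\beta_{t,b}=0$ for all $t>e$ and all $b$, so every block of $c$ lies in the subcode $\langle\theta_1\rangle\oplus\cdots\oplus\langle\theta_e\rangle$ of $\F_q^{m}$. Because $\mathbf{b}_e$ is a nonzero codeword of $\mathfrak{C}_e$, it has at least $d(\mathfrak{C}_e)$ nonzero entries; for each such coordinate $b$ (i.e. $\beta_{e,b}\neq 0$), the $b$-th block of $c$ has nonzero $\langle\theta_e\rangle$-component $\iota_e(\beta_{e,b})$ and is therefore a nonzero codeword of $\langle\theta_1\rangle\oplus\cdots\oplus\langle\theta_e\rangle$. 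By definition of minimum distance, each such block has Hamming weight at least $d(\langle\theta_1\rangle\oplus\cdots\oplus\langle\theta_e\rangle)$. Summing the block weights over (at least) these $d(\mathfrak{C}_e)$ coordinates gives
$$\mathrm{wt}(c)\ \geq\ d(\mathfrak{C}_e)\,d\bigl(\langle\theta_1\rangle\oplus\cdots\oplus\langle\theta_e\rangle\bigr)\ \geq\ \min_{1\leq e'\leq g}\bigl\{d(\mathfrak{C}_{e'})\,d(\langle\theta_1\rangle\oplus\cdots\oplus\langle\theta_{e'}\rangle)\bigr\}.$$
As $c$ was an arbitrary nonzero codeword, taking the minimum of $\mathrm{wt}(c)$ over all nonzero $c$ yields the claimed bound on $d(C)$. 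The ordering hypothesis $d(\mathfrak{C}_1)\leq\cdots\leq d(\mathfrak{C}_g)$ is used only to present the bound in this normalized form; since $d(\langle\theta_1\rangle\oplus\cdots\oplus\langle\theta_e\rangle)$ is automatically non-increasing in $e$, sorting the outer codes by increasing distance is what makes the indicated product the natural quantity to minimize.

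I expect the main obstacle to be the first paragraph rather than the counting: one must justify cleanly that summing over the $g$ concatenations acts block-by-block and that, via the orthogonal-idempotent direct sum $\bigoplus_t\langle\theta_t\rangle$, the $\langle\theta_e\rangle$-component of a block detects exactly whether $\beta_{e,b}=0$. Once this weight-decomposition is in place, the inequality is a direct two-factor count: $d(\mathfrak{C}_e)$ controls the number of active blocks and $d(\langle\theta_1\rangle\oplus\cdots\oplus\langle\theta_e\rangle)$ controls the weight within each active block. A minor point worth noting is that restricting attention to the coordinates with $\beta_{e,b}\neq 0$, rather than all nonzero blocks, already suffices, which keeps the argument free of any monotonicity input beyond the trivial remark above.
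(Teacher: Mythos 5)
Your argument is correct: decomposing each length-$m$ block of $c$ via the pairwise orthogonal idempotents, locating the largest index $e$ with a nonzero outer word, and counting at least $d(\mathfrak{C}_e)$ nonzero blocks each of weight at least $d(\langle\theta_1\rangle\oplus\cdots\oplus\langle\theta_e\rangle)$ is exactly the standard proof of Jensen's bound. Note that the paper itself gives no proof of this statement --- it is quoted from Jensen's 1985 article --- so there is nothing to diverge from; your write-up reproduces Jensen's original argument, and your remark that the sorting of the $\mathfrak{C}_t$ only normalizes the form of the bound (the counting works for any labeling) is also accurate.
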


\section{QCCD Codes are Asymptotically Good}\label{results}
We start with a characterization of QCCD codes via their constituents.

\begin{thm} \label{CDcriteria}
Let $C$ be a $q$-ary QC code of length $m\ell$ and index $\ell$ whose CRT decomposition is as in (\ref{CRT2}) and (\ref{CRT4}). Then $C$ is QCCD code if and only if $C_i$ is Hermitian LCD for all $1\leq i \leq s$ and  $C_j' \cap C_j''^{\bot_e}=\{0\}$, $C_j'' \cap C_j'^{\bot_e}=\{0\}$, for all $1\leq j \leq t$.
\end{thm}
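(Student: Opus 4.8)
The plan is to exploit the CRT decomposition of both the code and its Euclidean dual, and to observe that an LCD condition is simply the statement that the code and its dual intersect trivially, which decomposes componentwise. Recall that $C$ is LCD (complementary dual) precisely when $C\cap C^\bot=\{0\}$. I would first write the CRT decomposition of $C$ as in (\ref{CRT2}) and that of $C^\bot$ as in (\ref{CRT3}). The central point is that both decompositions are taken with respect to the \emph{same} direct sum of constituent ambient spaces: the $G_i^\ell$-component, and each pair of $H_j'^\ell$ and $H_j''^\ell$ components. Since the CRT isomorphism is a direct sum decomposition of the ambient space $\F_q^{m\ell}$ into these constituent blocks, intersection commutes with the decomposition, so $C\cap C^\bot=\{0\}$ holds if and only if the intersection vanishes in each constituent block separately.

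Next I would carry out this block-by-block comparison. For the self-reciprocal factors, the $G_i^\ell$-block of $C$ is $C_i$ while the corresponding block of $C^\bot$ is $C_i^{\bot_h}$ by (\ref{CRT3}); hence trivial intersection in this block reads $C_i\cap C_i^{\bot_h}=\{0\}$, which is by definition the statement that $C_i$ is Hermitian LCD. For each reciprocal pair $(h_j,h_j^*)$, the relevant block of $\F_q^{m\ell}$ is the two-fold sum $H_j'^\ell\oplus H_j''^\ell$; here $C$ contributes $C_j'\oplus C_j''$ while $C^\bot$ contributes $C_j''^{\bot_e}\oplus C_j'^{\bot_e}$ (note the crossing of indices in (\ref{CRT3})). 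I would then argue that the intersection of $C_j'\oplus C_j''$ with $C_j''^{\bot_e}\oplus C_j'^{\bot_e}$ is itself the direct sum of the intersections in each of the two summands $H_j'^\ell$ and $H_j''^\ell$, because these two summands are themselves in direct sum. This yields precisely $\bigl(C_j'\cap C_j''^{\bot_e}\bigr)\oplus\bigl(C_j''\cap C_j'^{\bot_e}\bigr)$, which vanishes if and only if both $C_j'\cap C_j''^{\bot_e}=\{0\}$ and $C_j''\cap C_j'^{\bot_e}=\{0\}$.

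Combining the two cases gives the equivalence: $C\cap C^\bot=\{0\}$ iff $C_i$ is Hermitian LCD for every $i$ and the two crossing intersections vanish for every $j$, which is exactly the claimed characterization. The remaining subtlety I would have to justify carefully is that intersection genuinely distributes over the CRT direct sum; this is immediate since the CRT map is an $\F_q$-linear isomorphism sending the ambient space onto the external direct sum of the constituent spaces, so an element lies in $C\cap C^\bot$ iff each of its components lies in the corresponding block of both $C$ and $C^\bot$.

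The step I expect to be the main obstacle is verifying that the Hermitian structure on the self-reciprocal blocks is the correct intrinsic notion to pair $C_i$ against itself. Concretely, one must confirm that the dual block attached to $G_i^\ell$ in (\ref{CRT3}) is $C_i^{\bot_h}$ with the Hermitian form (\ref{hermprod}) rather than a Euclidean dual of some partner constituent, which is exactly why the self-reciprocal factors behave differently from the reciprocal pairs. This rests entirely on the duality result of \cite{LS2} recorded in (\ref{CRT3}), so once that is invoked the argument is a clean bookkeeping of direct summands; the only real care needed is to keep the index crossing $C_j'\leftrightarrow C_j''^{\bot_e}$ straight when matching blocks.
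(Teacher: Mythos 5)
Your argument is correct and is exactly the one the paper intends: the paper's proof is the single line ``Immediate from the CRT decomposition of the dual code $C^{\bot}$ in (\ref{CRT3})'', and what you have written is precisely the blockwise bookkeeping (intersection distributing over the CRT direct sum, Hermitian dual on the self-reciprocal blocks, index crossing on the reciprocal pairs) that this one-liner suppresses. No difference in approach, only in the level of detail.
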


\begin{proof}
Immediate from the CRT decomposition of the dual code $C^{\bot}$ in (\ref{CRT3}).
\end{proof}

\begin{cor} \label{CDinstance}
Suppose that the CRT decomposition of $C$ is as in (\ref{CRT2}) with Euclidean LCD codes $C_j'=C_j''$ over $H_j'=H_j''$ (for all $1\leq j \leq t$) and Hermitian LCD codes $C_i$'s over $G_i$'s (for all $1\leq i \leq s$). Then $C$ is QCCD.
\end{cor}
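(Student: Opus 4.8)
The plan is to deduce this corollary directly from Theorem~\ref{CDcriteria} by verifying that the hypotheses stated here imply the two conditions appearing in that characterization. Theorem~\ref{CDcriteria} tells us that $C$ is QCCD if and only if (a) each $C_i$ is Hermitian LCD over $G_i$, and (b) for each $j$ we have $C_j' \cap C_j''^{\bot_e}=\{0\}$ together with $C_j'' \cap C_j'^{\bot_e}=\{0\}$. Condition (a) is one of our hypotheses outright, so nothing remains to check there. The work, such as it is, lies entirely in checking condition (b) from the assumption that $C_j'=C_j''$ is Euclidean LCD over the common field $H_j'=H_j''$.

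First I would fix an index $j$ and write $D:=C_j'=C_j''$, a Euclidean LCD code over the field $H_j'=H_j''$; by definition of LCD this means $D\cap D^{\bot_e}=\{0\}$. The two intersections in condition (b) then both read $C_j'\cap C_j''^{\bot_e}=D\cap D^{\bot_e}$ and $C_j''\cap C_j'^{\bot_e}=D\cap D^{\bot_e}$, since substituting $C_j'=C_j''=D$ into each expression gives precisely $D\cap D^{\bot_e}$. Both intersections therefore equal $\{0\}$ by the LCD hypothesis on $D$. This holds for every $1\leq j\leq t$, so condition (b) is satisfied across the board.

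With both (a) and (b) established for all relevant indices, Theorem~\ref{CDcriteria} applies verbatim and yields that $C$ is a QCCD code, which is the assertion of the corollary. The only subtlety worth flagging, and the one place where care is needed rather than routine symbol-pushing, is the requirement that the equality $C_j'=C_j''$ be meaningful: the codes $C_j'$ and $C_j''$ a priori live over the two distinct fields $H_j'=\F_q[x]/\langle h_j\rangle$ and $H_j''=\F_q[x]/\langle h_j^*\rangle$, so the hypothesis implicitly asserts the identification $H_j'=H_j''$ of these fields (both are extensions of $\F_q$ of the same degree $\deg h_j=\deg h_j^*$). One should confirm that under this identification the Euclidean inner product used to form $C_j'^{\bot_e}$ and $C_j''^{\bot_e}$ is the same bilinear form, so that $D^{\bot_e}$ is unambiguous; once that compatibility is granted, the substitution in the previous paragraph is legitimate and the proof is complete. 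I do not expect any genuine obstacle here, since the corollary is essentially a specialization of the theorem to the symmetric case $C_j'=C_j''$.
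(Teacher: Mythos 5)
Your proposal is correct and matches the paper's own (one-line) proof: both deduce the corollary from Theorem~\ref{CDcriteria} by observing that setting $C_j'=C_j''=D$ turns the two cross-intersection conditions into $D\cap D^{\bot_e}=\{0\}$, which is exactly the Euclidean LCD hypothesis. Your extra remark about identifying the fields $H_j'=H_j''$ and the inner products is a reasonable clarification but does not change the argument.
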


\begin{proof}
Since each $C_j'$ is LCD, we have the necessary condition satisfied on the constituents (outer codes) defined over $H_j'$'s, for all $1\leq j\leq t$.
\end{proof}

Corollary \ref{CDinstance} suggests an easy construction of QCCD codes from LCD codes. In fact, we will use this idea to show the existence of good long QCCD codes in the next result.

\begin{thm}\label{asymptotics}
Let $q$ be a power of a prime and $m\geq 2$ be relatively prime to $q$ such that $q\equiv a$ mod $m$. Assume that no power of $a$ is congruent to $(m-1)$ mod $m$. Then there exists an asymptotically good sequence of $q$-ary QCCD codes where each QC code in the sequence has index length/$m$.
\end{thm}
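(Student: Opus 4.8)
The plan is to construct the asymptotically good sequence by fixing $m$ and letting $\ell \to \infty$, building each QCCD code as a concatenation whose outer codes are themselves drawn from a known asymptotically good family of LCD codes. By Corollary \ref{CDinstance}, it suffices to produce, for each constituent field, an outer code over that field which is either Euclidean LCD (in the $H_j'=H_j''$ paired case) or Hermitian LCD (in the self-reciprocal $G_i$ case), and then use Jensen's bound (Theorem \ref{jensen}) to control the resulting minimum distance from below.

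First I would clarify the role of the hypothesis that no power of $a$ is congruent to $m-1$ modulo $q$. This condition says precisely that $-1$ does not lie in the cyclotomic cosets generated by $q$ acting on $\mathbb{Z}/m\mathbb{Z}$, which forces every irreducible factor of $x^m-1$ in the factorization \eqref{factors} to be of the paired type $h_jh_j^*$ rather than self-reciprocal; equivalently, $s=0$ and only the constituents $C_j', C_j''$ over the fields $H_j', H_j''$ occur. This is the crucial simplification: it lets me avoid the Hermitian LCD constituents entirely and reduce the whole construction to the Euclidean LCD setting, where an asymptotically good sequence of $q$-ary LCD codes is already known to exist by the goodness result cited in the introduction \cite{S2}. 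So I would set each pair by choosing $C_j' = C_j''$ to be a good Euclidean LCD code of length $\ell$ over the appropriate field, which by Corollary \ref{CDinstance} guarantees the concatenated code $C$ is QCCD.

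Next I would control the two asymptotic parameters. For the rate, concatenation with the minimal cyclic codes $\langle \theta_j'\rangle, \langle \theta_j''\rangle$ multiplies lengths by $m$ (a fixed constant) while the outer-code dimensions sum up, so a uniformly positive rate of the outer LCD family descends to a uniformly positive rate of the sequence $\{C\}$, whose index is indeed $m\ell/m = \ell = \text{length}/m$ as required. For the minimum distance, I would invoke Theorem \ref{jensen}: choosing all outer codes from the good LCD family so that they share a common positive relative distance $\delta > 0$, the bound \eqref{jensens concat bound} gives $d(C) \geq \delta \ell \cdot d(\langle\theta_1\rangle)$ or a comparable product, which grows linearly in the total length $m\ell$. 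Hence the relative distance of the sequence stays bounded below by a positive constant.

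The hard part will be arranging the outer codes uniformly across all the distinct constituent fields $H_j', H_j''$ so that Jensen's bound yields a genuinely linear growth rather than being throttled by the weakest inner code $\langle\theta_e\rangle$. Concretely, the minimum over $e$ in \eqref{jensens concat bound} pairs the outer distances with cumulative inner distances, and I must ensure that ordering the $\mathfrak{C}_e$ by increasing distance does not cause the product to collapse; this requires that the good LCD outer family can be instantiated over each field with comparable relative distance, and that the inner cyclic codes $\langle\theta_j'\rangle\oplus\cdots$ contribute at least a constant factor. I expect the technical crux to lie in verifying that the asymptotically good LCD family furnishes codes of the needed length $\ell$ over \emph{each} of the finitely many fields $H_j'$ simultaneously, with rates and distances bounded away from zero independently of $\ell$; once that uniformity is secured, the rate and distance estimates above combine to exhibit an asymptotically good family and complete the proof.
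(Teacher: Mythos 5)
There is a genuine gap in your reading of the hypothesis, and it undermines the architecture of your construction. The condition that no power of $a\equiv q$ is congruent to $m-1\equiv -1$ modulo $m$ says only that $-1$ does not lie in the $q$-cyclotomic coset of $1$, i.e.\ that the minimal polynomial of a \emph{primitive} $m$th root of unity $\xi$ is not self-reciprocal, so that $h'$ and $h''$ (the minimal polynomials of $\xi$ and $\xi^{-1}$) form one genuine reciprocal pair with $H'=\F_q(\xi)=\F_q(\xi^{-1})=H''$. It does \emph{not} force $s=0$ in (\ref{factors}): the factor $x-1$ is always self-reciprocal, so $s\geq 1$ unconditionally, and further self-reciprocal factors can survive (e.g.\ $q=5$, $m=12$: the coset of $1$ is $\{1,5\}$, which misses $11$, yet the cosets $\{2,10\}$, $\{4,8\}$, $\{6\}$ are all closed under negation). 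Consequently your plan to ``avoid the Hermitian LCD constituents entirely'' by populating \emph{every} reciprocal pair with a good Euclidean LCD code leaves the constituents over the fields $G_i$ unspecified; filling them with nonzero Hermitian LCD codes is precisely what this theorem is trying not to require (that case is treated separately in Theorem \ref{asymptotic-2-herm}, after Hermitian LCD codes are shown to be asymptotically good in Theorem \ref{hermit3}).

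The paper's proof sidesteps all of this --- including the uniformity-across-fields issue you flag as the ``hard part'' --- by using \emph{only} the single pair $(H',H'')$ that the hypothesis guarantees: take one asymptotically good sequence $(C_i)$ of Euclidean LCD codes over $H=H'=H''$, set $D_i=\bigl(\langle\theta'\rangle\Box C_i\bigr)\oplus\bigl(\langle\theta''\rangle\Box C_i\bigr)$, and take every other constituent to be the zero code (which is trivially LCD in either sense, so Corollary \ref{CDinstance} still applies). Jensen's bound then involves only two outer codes, both equal to $C_i$, giving $d(D_i)\geq d\bigl(\langle\theta'\rangle\oplus\langle\theta''\rangle\bigr)d_i$ with a fixed inner-code constant, while the rate is $2ek_i/(m\ell_i)$ with $e=[H:\F_q]$; no simultaneous choice of good codes over several fields is needed. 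The minimal fix to your argument is therefore to zero out every constituent except one reciprocal pair; as written, your construction is undefined on the self-reciprocal components and the concluding uniformity step is left unproved.
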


\begin{proof}
Let $\xi$ be a primitive $m$th root of unity over $\F_q$. The condition on $m$ and $q$ guarantees that $\xi$ and $\xi^{-1}$ have distinct minimal polynomials $h'(x),h''(x)$ over $\F_q$. Let $H'=\F_q[x]/\langle h'(x) \rangle$ and $H''=\F_q[x]/\langle h''(x) \rangle$. Note that these fields are equal: $H'=\F_q(\xi)=\F_q(\xi^{-1})=H''$. Let us denote both by $H$. Let us denote the primitive idempotents corresponding to the $q$-ary length $m$ minimal cyclic codes with check polynomials $h'$ and $h''$ by $\theta'$ and $\theta''$ respectively. Let $(C_i)$ be an asymptotically good sequence of (Euclidean) LCD codes over $H$. Such a sequence exists by \cite{M} and \cite{S2}. Assume that each $C_i$ has parameters $[\ell_i,k_i,d_i]$. For each $i\geq 1$ define the $q$-ary QC code $D_i$ as
\begin{equation}\label{sequence}
D_i:=C_i\oplus C_i=\bigl( \langle \theta' \rangle \Box C_i \bigr)\oplus \bigl( \langle \theta'' \rangle \Box C_i \bigr) \subset H^{\ell_i}\oplus H^{\ell_i}.
\end{equation}
By Corollary \ref{CDinstance}, $D_i$ is a QCCD code. Index of $D_i$ is clearly $\ell_i$, which is the ratio of its length and $m$. If $e:=[H:\F_q]$, the length and the ($q$-ary) dimension of $D_i$ is $m\ell_i$ and $2ek_i$ respectively. By Jensen's bound (Theorem \ref{jensen}), the minimum distance of $D_i$ satisfies
$$d(D_i)\geq \min\{d(\langle \theta' \rangle)d_i,d(\langle \theta' \rangle \oplus \langle \theta'' \rangle)d_i \}\geq d(\langle \theta' \rangle \oplus \langle \theta'' \rangle)d_i.$$
For the sequence of QCCD codes $(D_i)$, the asymptotic rate is
$$R=\lim_{i\to \infty} \frac{2ek_i}{m\ell_i}=\frac{2e}{m}\lim_{i\to \infty} \frac{k_i}{\ell_i},$$
and this quantity is positive since $(C_i)$ is asymptotically good. For the asymptotic distance, we have
$$\delta=\lim_{i\to \infty} \frac{d(D_i)}{m\ell_i}\geq d(\langle \theta' \rangle \oplus \langle \theta'' \rangle) \lim_{i\to \infty} \frac{d_i}{\ell_i}.$$
Note again that $\delta$ is positive since $(C_i)$ is asymptotically good. Hence we have the desired result.
\end{proof}

Note that the proof of Theorem \ref{asymptotics} utilizes a pair of identical constituents which are defined over the ``associated" fields $H'$ and $H''$. Next, we will show that LCD codes with respect to Hermitian inner product are asymptotically good. This will allow us to obtain another asymptotic result on QCCD codes (see Thereom \ref{asymptotic-2-herm}). Let us note that our arguments will be very similar to those of Massey's in \cite{M}.

Let us first fix some notation. Let $f(x)\in \F_q[x]$ be a self-reciprocal irreducible factor of $x^m-1$ (i.e. $f$ is one of the $g_i$'s in (\ref{factors})). It is well-known that $\deg f=r$ is even and hence $G_f=\F_q[x]/\langle f(x) \rangle$ is an even degree extension of $\F_q$. Let us denote this field by $\F_Q$ (i.e. $Q=q^r$). In this case the Hermitian inner product on $\F_Q^\ell$ (cf. (\ref{hermprod})) is given by
\begin{equation}\label{herminnprod}
\langle (x_1,\ldots ,x_\ell),(y_1,\ldots, y_\ell ) \rangle =\sum_{i=1}^\ell x_iy_i^{\sqrt{Q}}.
\end{equation}
Let us denote, as it is usually done, raising to power $\sqrt{Q}$ (conjugation) operation by \ $\bar{}$ \ .

For an $\F_Q$-linear code $C\subset \F_Q^\ell$, it is easy to observe that $C^{\bot_h}=\left( C^{\sqrt{Q}}\right)^{\bot_e}=\left( \bar{C} \right)^{\bot_e}$, where $\bar{C}\subset \F_Q^\ell$ is another $\F_Q$-linear code whose codewords are obtained from those of $C$ by coordinate-wise conjugation. The following is clear.

\begin{lem}\label{hermit1}
If $C$ is an $[\ell,k]$ linear code over $\F_Q$ with a generating matrix $G$, then $\bar{C}$ is also an $[\ell,k]$ linear code over $\F_Q$ with a generating matrix $\bar{G}$. Moreover, $C$ is Hermitian LCD if and only if $C\cap (\bar{C})^{\bot_e} = \{0\}$. In other words, $\F_Q^\ell = C\oplus (\bar{C})^{\bot_e}$.
\end{lem}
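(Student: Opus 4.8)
The plan is to treat the three assertions in turn, since the Hermitian LCD characterization and the direct-sum statement follow almost immediately once the claim about $\bar{C}$ is established. First I would pin down the conjugation map $\sigma:x\mapsto x^{\sqrt{Q}}$. It is the Frobenius automorphism of $\F_Q$ fixing $\F_{\sqrt{Q}}$, and because $(\sqrt{Q})^2=Q$ fixes $\F_Q$ pointwise, $\sigma$ is an involution: $\bar{\bar{x}}=x$. Writing a codeword as an $\F_Q$-combination $c=\sum_{i}\lambda_i g_i$ of the rows $g_i$ of $G$ and applying $\sigma$ coordinate-wise gives $\bar{c}=\sum_i \bar{\lambda_i}\,\bar{g_i}$; as $\lambda_i$ ranges over $\F_Q$ so does $\bar{\lambda_i}$, so $\bar{C}$ is exactly the $\F_Q$-row span of $\bar{G}$, hence $\F_Q$-linear.

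Next I would check that $\bar{G}$ still has rank $k$, so that $\bar{C}$ is an $[\ell,k]$ code. Suppose $\sum_i \mu_i \bar{g_i}=0$ for some $\mu_i\in\F_Q$; applying the involution $\sigma$ and using that it is a field automorphism yields $\sum_i \bar{\mu_i}\,g_i=0$, and the $\F_Q$-independence of the rows of $G$ forces $\bar{\mu_i}=0$, i.e. $\mu_i=0$ for all $i$. Thus the rows of $\bar{G}$ are $\F_Q$-independent and $\dim_{\F_Q}\bar{C}=k$.

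For the Hermitian LCD characterization I would simply invoke the identity $C^{\bot_h}=(\bar{C})^{\bot_e}$ recorded in the paragraph preceding the lemma. By definition $C$ is Hermitian LCD precisely when $C\cap C^{\bot_h}=\{0\}$, so substituting the identity yields the stated equivalence $C\cap(\bar{C})^{\bot_e}=\{0\}$ at once.

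Finally, the ``in other words'' clause is a dimension count. Since $\bar{C}$ is an $[\ell,k]$ code by the first part, its Euclidean dual $(\bar{C})^{\bot_e}$ has $\F_Q$-dimension $\ell-k$, while $\dim_{\F_Q}C=k$; hence $\dim_{\F_Q}C+\dim_{\F_Q}(\bar{C})^{\bot_e}=\ell$, and together with the trivial intersection just obtained this forces $\F_Q^\ell=C\oplus(\bar{C})^{\bot_e}$. There is no genuine obstacle here; the only point needing a moment's care is the preservation of dimension under conjugation, which rests on $\sigma$ being an involutive field automorphism rather than merely an additive map.
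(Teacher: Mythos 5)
Your proof is correct; the paper itself offers no argument (it simply declares the lemma ``clear''), and your write-up supplies exactly the routine verifications the authors had in mind: conjugation is an involutive field automorphism so $\bar{C}$ is the rank-$k$ row span of $\bar{G}$, the identity $C^{\bot_h}=(\bar{C})^{\bot_e}$ stated just before the lemma gives the LCD equivalence, and a dimension count yields the direct sum. Nothing further is needed.
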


The following gives a criteria for Hermitian LCD codes and is analogous to the result for Euclidean LCD codes in \cite[Proposition 1]{M}.

\begin{prop}\label{hermit2}
$C$ is a Hermitian LCD code if and only if the $k\times k$ matrix $G\bar{G}^T$ is nonsingular.
\end{prop}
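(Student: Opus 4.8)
The plan is to deduce the statement from Lemma \ref{hermit1}, which reduces Hermitian complementarity to the purely Euclidean condition $C \cap (\bar{C})^{\bot_e} = \{0\}$, and then to translate this intersection condition into the non-singularity of $G\bar{G}^T$, exactly in the spirit of Massey's Euclidean argument in \cite{M}.

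First I would parametrize the relevant codewords. Since $G$ is a generating matrix of the $[\ell,k]$ code $C$, every codeword of $C$ has the form $xG$ for a row vector $x\in\F_Q^k$, and this representation is unique because $G$ has full row rank $k$. By Lemma \ref{hermit1}, $\bar{C}$ is generated by $\bar{G}$, so a vector $w\in\F_Q^\ell$ lies in $(\bar{C})^{\bot_e}$ precisely when it is Euclidean-orthogonal to every row of $\bar{G}$, that is, when $w\bar{G}^T=0$.

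Next I would characterize the intersection. A codeword $v=xG$ of $C$ lies in $(\bar{C})^{\bot_e}$ if and only if $(xG)\bar{G}^T=x(G\bar{G}^T)=0$. Hence $C \cap (\bar{C})^{\bot_e}=\{\,xG : x(G\bar{G}^T)=0\,\}$, and by the full-rank observation this intersection is trivial if and only if the equation $x(G\bar{G}^T)=0$ forces $xG=0$, equivalently $x=0$.

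Finally I would read off both implications. If $G\bar{G}^T$ is non-singular, then $x(G\bar{G}^T)=0$ immediately yields $x=0$, so the intersection is trivial and $C$ is Hermitian LCD by Lemma \ref{hermit1}. Conversely, if $C$ is Hermitian LCD, then any $x$ in the left kernel of $G\bar{G}^T$ produces a codeword $xG\in C \cap (\bar{C})^{\bot_e}=\{0\}$, whence $xG=0$ and thus $x=0$; so the $k\times k$ matrix $G\bar{G}^T$ has trivial left kernel and is non-singular. The argument is essentially bookkeeping, so the only real care needed is in tracking the conjugation-and-transpose combinations and in invoking the full row rank of $G$ at the right moment; this is the single point that distinguishes the Hermitian computation from Massey's Euclidean version, and where an ordering slip would be easiest to make.
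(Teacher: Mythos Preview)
Your proof is correct. It differs slightly from the paper's in the forward direction: the paper, following Massey more literally, constructs the explicit projector $L=\bar{G}^T(G\bar{G}^T)^{-1}G$ and shows that $L$ fixes $C$ and annihilates $(\bar{C})^{\bot_e}$, thereby exhibiting the direct-sum decomposition $\F_Q^\ell=C\oplus(\bar{C})^{\bot_e}$; you instead argue directly that the left kernel of $G\bar{G}^T$ parametrizes $C\cap(\bar{C})^{\bot_e}$ via $x\mapsto xG$. Your route is shorter and more elementary; the paper's route yields the projector as a byproduct, which is not needed here but is sometimes useful. The converse direction is identical in both proofs.
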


\begin{proof}
Suppose $G\bar{G}^T$ is nonsingular. Consider the linear operator $L$ on $\F_Q^\ell$ defined by the $\ell \times \ell$ matrix $\bar{G}^T (G\bar{G}^T)^{-1}G$. Let $\vec{u}\in \F_Q^k$ and consider an arbitrary codeword $\vec{c}=\vec{u}G$ in $C$. Then,
$$L(\vec{c})=\vec{u}G\bar{G}^T (G\bar{G}^T)^{-1}G=\vec{u}G=\vec{c}.$$
Hence, the image of $L$ contains $C$ and is of dimension $\geq k$. For $\vec{v}\in (\bar{C})^{\bot_e} \subset \F_Q^\ell$, we have $\vec{v}\bar{G}^T=\vec{0}$. Hence,
$$L(\vec{v})=\vec{v}\bar{G}^T (G\bar{G}^T)^{-1}G=\vec{0}.$$
Therefore, the kernel of $L$ contains $(\bar{C})^{\bot_e}$ and it is of dimension $\geq (\ell - k)$. As a result, $\mbox{Im}(L)=C$, $\mbox{Ker}(L)=(\bar{C})^{\bot_e}$ and an element $\vec{c}\in C\cap \bar{C}^{\bot_e}$ satisfies $\vec{c}=L(\vec{c})=\vec{0}$.

For the converse, suppose $G\bar{G}^T$ is singular. Then there exists a nonzero $\vec{u}\in F_Q^k$ such that $\vec{u}G\bar{G}^T=\vec{0}$. Note that $\vec{c}=\vec{u}G$ is a nonzero vector in $C$ and it satisfies $\vec{c}\bar{G}^T=\vec{0}$. Therefore $\vec{c}$ also belongs to $(\bar{C})^{\bot_e}$, which is a contradiction.
\end{proof}

The following result proves that Hermitian LCD codes are asymptotically good. It is the analogue of \cite[Propositions 2 and 3]{M}.

\begin{thm}\label{hermit3}
Let $\tilde{C}$ be an $[\ell , k]$ linear code  over $\F_Q$ with a systematic generator matrix $\tilde{G}=[I_k:P]$. There exists a Hermitian LCD code $C$ over $\F_Q$ with parameters $[2\ell-k,k]$ and with $d(C)\geq d(\tilde{C})$. Hence, Hermitian LCD codes are asymptotically good.
\end{thm}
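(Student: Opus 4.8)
The plan is to mirror Massey's argument for the Euclidean case, but with the conjugation $\bar{}$ built into the Gram matrix, and to construct $C$ explicitly by appending columns to $\tilde G$. Write $\tilde G=[I_k:P]$ with $P$ a $k\times(\ell-k)$ matrix over $\F_Q$, and look for a second $k\times(\ell-k)$ matrix $R$, setting
$$G=[\,I_k:P:R\,].$$
This $G$ has full row rank (it contains $I_k$) and $k+(\ell-k)+(\ell-k)=2\ell-k$ columns, so it generates a $[2\ell-k,k]$ code $C$, for any choice of $R$. The distance bound is then free: a codeword is $\vec u G=(\vec u,\vec u P,\vec u R)$ for $\vec u\in\F_Q^k$, and deleting its last $\ell-k$ coordinates returns $\vec u\tilde G\in\tilde C$, so $\mathrm{wt}(\vec u G)\geq \mathrm{wt}(\vec u\tilde G)\geq d(\tilde C)$; hence $d(C)\geq d(\tilde C)$ regardless of $R$. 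The whole problem therefore reduces to choosing $R$ so that $C$ is Hermitian LCD.

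By Proposition \ref{hermit2}, $C$ is Hermitian LCD if and only if $G\bar G^T=I_k+P\bar P^T+R\bar R^T$ is nonsingular. Put $A:=I_k+P\bar P^T$; a short computation shows $A$ is Hermitian ($\bar A^T=A$), and $R\bar R^T=\sum_j \vec r_j\,\bar{\vec r}_j^{\,T}$ is a sum of rank-one Hermitian matrices indexed by the columns $\vec r_j$ of $R$. So I would correct the rank of the running Hermitian matrix one column at a time. The technical heart, and the step I expect to be the main obstacle because of the usual isotropy pathologies over finite fields, is the rank-one lemma: if $H$ is Hermitian over $\F_Q$ and $\vec r\notin\operatorname{Col}(H)$, then $\operatorname{rank}(H+\vec r\,\bar{\vec r}^{\,T})=\operatorname{rank}(H)+1$. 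I would prove it by a direct kernel computation: $(H+\vec r\,\bar{\vec r}^{\,T})\vec x=0$ gives $H\vec x=-(\bar{\vec r}^{\,T}\vec x)\vec r$, and since the left side lies in $\operatorname{Col}(H)$ while $\vec r\notin\operatorname{Col}(H)$, the scalar $\bar{\vec r}^{\,T}\vec x$ must vanish, forcing $H\vec x=0$ as well. Thus $\ker(H+\vec r\,\bar{\vec r}^{\,T})=\ker H\cap\{\vec x:\bar{\vec r}^{\,T}\vec x=0\}$. Because $H$ is Hermitian, $(\ker H)^{\bot_e}=\operatorname{Col}(\bar H)=\overline{\operatorname{Col}(H)}$, so the condition $\vec r\notin\operatorname{Col}(H)$ is \emph{exactly} the statement that $\bar{\vec r}$ is not orthogonal to all of $\ker H$; hence the hyperplane $\{\bar{\vec r}^{\,T}\vec x=0\}$ meets $\ker H$ in dimension one less, and the rank goes up by one. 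Note this uses only $\vec r\notin\operatorname{Col}(H)$ and never a non-isotropy hypothesis, which is what makes the argument go through over any $\F_Q$.

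Finally I would check that $R$ has enough columns to finish the job. A vector $\vec x\in\ker A$ satisfies $\vec x=-P\bar P^T\vec x\in\operatorname{Col}(P)$, so $\dim\ker A\leq\operatorname{rank}P\leq\ell-k$; that is, the corank of $A$ is at most $\ell-k$. Starting from $A$ and applying the lemma at most $\ell-k$ times—at each step picking $\vec r_j$ outside the column space of the current matrix, which is possible precisely because it is not yet nonsingular—yields a nonsingular $I_k+P\bar P^T+R\bar R^T$, padding any unused columns of $R$ with zeros. This produces a Hermitian LCD code $C$ with parameters $[2\ell-k,k]$ and $d(C)\geq d(\tilde C)$. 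Asymptotic goodness then follows by feeding an asymptotically good family of $\F_Q$-linear codes, put in systematic form, into this construction: the rate $k/\ell$ becomes $k/(2\ell-k)$ and the relative distance is scaled by $\ell/(2\ell-k)$, and both stay bounded away from zero.
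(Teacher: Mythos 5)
Your proof is correct, but it takes a genuinely different route from the paper. The paper makes the Gram matrix collapse to the identity by an explicit choice of the appended block: in characteristic $2$ it takes $G=[I_k:P:P]$, so $G\bar G^T=I_k+P\bar P^T+P\bar P^T=I_k$; in odd characteristic it writes $Q=s^2$, picks $a\in\F_Q$ with $a^{s+1}=-1$ (possible since $2(s+1)\mid Q-1$, i.e.\ by surjectivity of the norm), and takes $G=[I_k:P:aP]$, so the third block contributes $a^{s+1}P\bar P^T=-P\bar P^T$ and again $G\bar G^T=I_k$. Your argument instead fixes $G=[I_k:P:R]$ and chooses the columns of $R$ greedily, using a rank-one update lemma for Hermitian matrices over $\F_Q$; the key identity $(\ker H)^{\bot_e}=\operatorname{Col}(\bar H)$ and the bound $\dim\ker(I_k+P\bar P^T)\le\operatorname{rank}P\le\ell-k$ are both right, so the correction terminates within the available $\ell-k$ columns and Proposition \ref{hermit2} applies. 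What the paper's proof buys is brevity and a completely explicit generator matrix (at the cost of a case split on the characteristic and the norm fact); what yours buys is uniformity over all $Q$ and a reusable lemma showing that \emph{any} singular Hermitian Gram matrix can be repaired by appending at most corank-many columns, though the resulting code is no longer canonical. Both yield the same parameters $[2\ell-k,k]$, the same distance bound via deletion of the appended coordinates, and the same asymptotic conclusion.
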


\begin{proof}
Suppose that the characteristic of $\F_Q$ is 2. Let $G=[I_k:P:P]$ and $C$ be the $Q$-ary linear code generated by $G$.  We have
$$G\bar{G}^T=[I_k:P:P][I_k:\bar{P}:\bar{P}]^T=I_k+ P\bar{P}^T + P\bar{P}^T=I_k,$$
since characteristic is 2. Hence by Proposition \ref{hermit2}, $C$ is Hermitian LCD.

If the characteristic is not 2, let $Q=s^2$ and note that $s$ has to be odd. Therefore $2(s+1)$ divides $Q-1=(s-1)(s+1)$ and there exists an element $a\in \F_Q$ such that $a^{s+1}=-1$. Now let $G=[I_k:P:aP]$ and $C$ be the code with this generating matrix. Again $G\bar{G}^T=I_k$ and hence $C$ is Hermitian LCD.

It is clear in both cases that the dimension of $C$ is $k$ and its minimum distance is at least as big as that of $\tilde{C}$. Hence Hermitian LCD codes are asymptotically good (in any characteristic) since linear codes are asymptotically good.
\end{proof}

\begin{thm}\label{asymptotic-2-herm}
Let $q$ be a power of a prime and $m\geq 2$ be relatively prime to $q$ such that there exists $1\leq i <m$ and $j\geq 0$ such that $iq^j \equiv (m-i)$ mod $m$. Then there exists an asymptotically good sequence of $q$-ary QCCD codes where each QC code in the sequence has index length/$m$.
\end{thm}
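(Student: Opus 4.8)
The plan is to adapt the proof of Theorem \ref{asymptotics}, replacing the associated pair $(h',h'')$ of reciprocal factors by a single self-reciprocal factor that will carry a Hermitian LCD constituent. First I would translate the arithmetic hypothesis into a statement about the factorization (\ref{factors}). Let $\xi$ be a primitive $m$-th root of unity over $\F_q$. The congruence $iq^j\equiv m-i\equiv -i \pmod m$ says exactly that $\xi^{-i}$ lies in the $q$-cyclotomic coset of $\xi^i$, i.e. $\xi^{-i}$ is a Galois conjugate of $\xi^i$. Consequently the conjugacy class of $\xi^i$ is closed under inversion, so the minimal polynomial $g$ of $\xi^i$ over $\F_q$ is self-reciprocal; it is one of the $g_\iota$ in (\ref{factors}). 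Put $G=\F_q[x]/\langle g\rangle=\F_Q$ with $Q=q^r$, $r=\deg g$, and let $\theta$ be the primitive idempotent of the length-$m$ minimal cyclic code with check polynomial $g$, as in (\ref{CRT4}).

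Next I would construct the candidate family as concatenations with a single nonzero constituent. By Theorem \ref{hermit3} there is an asymptotically good sequence $(C_n)$ of Hermitian LCD codes over $\F_Q$, with parameters $[\ell_n,k_n,d_n]$: when $r$ is even, $Q$ is a square and this is precisely the setting of Theorem \ref{hermit3}; in the degenerate case $r=1$ (forced when $\xi^i=-1$) the form (\ref{hermprod}) on $G=\F_q$ collapses to the Euclidean one, and one instead uses the asymptotic goodness of Euclidean LCD codes from \cite{M,S2}. For each $n$ set
\begin{equation*}
D_n:=\langle\theta\rangle\,\Box\,C_n,
\end{equation*}
a $q$-ary QC code of length $m\ell_n$ and index $\ell_n=\text{length}/m$ whose only nonzero constituent is $C_n$, all others being $\{0\}$.

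I would then confirm the QCCD property via Theorem \ref{CDcriteria}. The trivial constituents satisfy every clause automatically: the zero code is Hermitian LCD over each $G_\iota$, and for every reciprocal pair one has $\{0\}\cap\{0\}^{\bot_e}=\{0\}$; the chosen $C_n$ is Hermitian LCD by construction, the form (\ref{hermprod}) on $G^\ell$ agreeing with (\ref{herminnprod}) because $\xi^{-i}=(\xi^i)^{\sqrt Q}$ for a self-reciprocal factor of even degree. Hence $D_n$ is QCCD. It is worth stressing here why self-reciprocity is exactly what the hypothesis should provide: a single nonzero constituent could not be supported on one member $h_j$ of a reciprocal pair, since the clause $C_j'\cap C_j''^{\bot_e}=\{0\}$ with $C_j''=\{0\}$ would force $C_j'=\{0\}$. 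This is the structural reason the two hypotheses of Theorems \ref{asymptotics} and \ref{asymptotic-2-herm} are the natural dichotomy.

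Finally I would read off the asymptotic parameters, exactly as in Theorem \ref{asymptotics}. Jensen's bound (Theorem \ref{jensen}) applied to the one-term concatenation gives $d(D_n)\geq d(\langle\theta\rangle)\,d_n$, while $\dim_{\F_q}D_n=r\,k_n$. Thus the asymptotic rate is $R=\lim_n \frac{r k_n}{m\ell_n}=\frac{r}{m}\lim_n\frac{k_n}{\ell_n}$ and the asymptotic relative distance satisfies $\delta=\lim_n\frac{d(D_n)}{m\ell_n}\geq \frac{d(\langle\theta\rangle)}{m}\lim_n\frac{d_n}{\ell_n}$, both positive because $(C_n)$ is asymptotically good, giving the claim. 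The routine parts (Jensen's bound and the parameter count) are identical to Theorem \ref{asymptotics}; the main obstacle is the first step, namely making the passage from the congruence to a self-reciprocal factor precise and verifying that the degenerate $r=1$ case (where the Hermitian form degenerates to the Euclidean one) is still covered by an asymptotically good LCD sequence.
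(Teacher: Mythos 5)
Your proposal is correct and follows essentially the same route as the paper: extract a self-reciprocal irreducible factor $g$ from the congruence hypothesis, concatenate the corresponding primitive idempotent with an asymptotically good sequence of Hermitian LCD codes over $\F_q[x]/\langle g\rangle$ from Theorem \ref{hermit3}, and read off the parameters via Jensen's bound. Your explicit treatment of the degenerate case $\deg g=1$ (i.e.\ $\xi^i=-1$, where the Hermitian form collapses to the Euclidean one and one falls back on \cite{M,S2}) is a point the paper's proof passes over silently, and is a worthwhile addition.
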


\begin{proof}
Let $\xi$ be a primitive $m^{th}$ root of unity over $\F_q$. The condition on $m$ and $q$ guarantees that for some $i$, $\xi^i$ and $\xi^{-i}$ are in the same $q$-cyclotomic coset. In other words, $\xi^i$ and $\xi^{-i}$ share the same minimal polynomial $g(x)$ over $\F_q$. Hence, such a polynomial $g(x)\in \F_q[x]$ is a self-reciprocal irreducible factor of $x^m-1$.

Let $G=\F_q[x]/\langle g(x) \rangle$. Let us denote the primitive idempotent corresponding to the $q$-ary length $m$ minimal cyclic code with check polynomial $g$ by $\theta$. Let $(C_i)$ be an asymptotically good sequence of Hermitian LCD codes over $G$. Such a sequence exists by Theorem \ref{hermit3}. Assume that each $C_i$ has parameters $[\ell_i,k_i,d_i]$. For each $i\geq 1$ define the $q$-ary QC code $D_i$ as the QC code with one outer code:
\begin{equation}\label{sequence1}
D_i:=\langle \theta \rangle \Box C_i.
\end{equation}
If $e:=[G:\F_q]$, the length and the ($q$-ary) dimension of $D_i$ is $m\ell_i$ and $ek_i$ respectively. It is well-known that for a concatenated code as above,
$$d(\langle \theta \rangle \Box C_i )\geq d(\langle \theta \rangle)d(C_i).$$
Since $(C_i)$ is asymptotically good, $(D_i)$ is also asymptotically good.
\end{proof}

\begin{cor}\label{asymptotic-3-herm}
For any pair $q$ and $m$, which are relatively prime, there exists an asymptotically good sequence of QCCD codes over $\F_q$ where each QC code in the sequence has index length/$m$.
\end{cor}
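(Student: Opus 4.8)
The plan is to show that the hypotheses of Theorem \ref{asymptotics} and Theorem \ref{asymptotic-2-herm} are complementary and jointly exhaustive, so that at least one of them applies to every coprime pair $(q,m)$. The decisive observation is that it suffices to test the criterion on the single exponent $i=1$, that is, on the pair $\xi,\xi^{-1}$, where $\xi$ is a primitive $m$th root of unity over $\F_q$.

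First I would recall that $\xi$ and $\xi^{-1}$ lie in the same $q$-cyclotomic coset modulo $m$ if and only if $q^{j}\equiv -1\pmod m$ for some $j\geq 0$, and split into two cases accordingly. If no such $j$ exists, then no power of $q$ is congruent to $m-1\pmod m$, which is precisely the hypothesis of Theorem \ref{asymptotics}; invoking it produces the required asymptotically good sequence of QCCD codes of index length/$m$. If such a $j$ exists, then taking $i=1$ gives $1\cdot q^{j}\equiv m-1=(m-i)\pmod m$, which is exactly the hypothesis of Theorem \ref{asymptotic-2-herm}, and that theorem again furnishes the sequence. These two cases cover all possibilities, which settles the corollary for $m\geq 2$.

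The one place deserving care is the use of Theorem \ref{hermit3} within the second case, as it is phrased for a self-reciprocal irreducible factor of even degree. When $\xi$ and $\xi^{-1}$ share a minimal polynomial $g$, the $q$-cyclotomic coset of $1$ is stable under negation; since this coset consists of units modulo $m$ and, for $m\geq 3$, cannot contain the nonunit $m/2$, it has no self-paired element and hence even cardinality. Thus $\deg g$ is even and Theorem \ref{hermit3} applies verbatim. The remaining small cases are immediate: for $m=1$ the code is an arbitrary linear code, and for $m=2$ the relevant constituent field is $\F_q$ with $\xi^{-1}=\xi$, so the Hermitian form (\ref{hermprod}) collapses to the Euclidean one; in both situations the asymptotic goodness of Euclidean LCD codes from \cite{M,S2} suffices. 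I anticipate no real obstacle here: the entire argument reduces to noting that the exponent $i=1$ alone forces one of the two theorems to apply.
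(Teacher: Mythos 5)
Your proof is correct and follows essentially the same route as the paper's, which also argues that the hypotheses of Theorems \ref{asymptotics} and \ref{asymptotic-2-herm} are jointly exhaustive (phrased there as: either $x^m-1$ has a self-reciprocal irreducible factor other than $x-1$, in which case Theorem \ref{asymptotic-2-herm} applies, or all other irreducible factors come in reciprocal pairs, in which case Theorem \ref{asymptotics} applies). Your extra care about the even-degree requirement in Theorem \ref{hermit3} --- checking that the negation-stable cyclotomic coset of $1$ has even size for $m\geq 3$, and treating $m=2$ separately where the self-reciprocal factor $x+1$ has degree one and the Hermitian form degenerates to the Euclidean one --- is a welcome refinement of a point the paper's two-line proof glosses over.
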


\begin{proof}
Note that for a given $q$ and $m$, one of the conditions in Theorems \ref{asymptotics} or \ref{asymptotic-2-herm} must be satisfied. In other words, $x^m-1$ has either a self-reciprocal irreducible factor other than $(x-1)$, hence Theorem \ref{asymptotic-2-herm} can be used, or all irreducible factors other than $(x-1)$ come in pairs so that Theorem \ref{asymptotics} can be applied.
\end{proof}

\section{Cyclic LCD Codes with respect to the Hermitian Inner Product} \label{HermitianProduct}

Next, we want to investigate when cyclic codes are Hermitian LCD. For the Euclidean case, this question is answered affirmatively by Yang and Massey in \cite{YM}. Our result will also yield a result on multidimensional versions of cyclic and QC codes (see Theorem \ref{2Dcyclic} and Remark \ref{nD}).

Let $\F_Q$ be an even degree extension of $\F_q$ as before and equip $\F_Q^\ell$ with the Hermitian inner product (\ref{herminnprod}). We will set $\ell=\tilde{\ell}p^e$, where $p$ is the characteristic of the finite field and $p\nmid \tilde{\ell}$. Recall that the conjugation operation is denoted by \ $\bar{}$ \ and it raises elements of $\F_Q$  to $\sqrt{Q}$ power. For a polynomial $f(x)\in \F_Q[x]$, we will denote the conjugate polynomial by $\bar{f}(x)$, whose coefficients are conjugates of the relevant coefficients of $f(x)$. It is clear that if $C$ is a length $\ell$ cyclic code over $\F_Q$ with the generating polynomial $g(x)\in \F_Q[x]$, then $\bar{C}$ is a $Q$-ary cyclic code of length $\ell$ with the generating polynomial $\bar{g}(x)$. In particular the dimension of $\bar{C}$ is the same as the dimension of $C$.

For a polynomial $f(x)\in \F_Q[x]$ with a nonzero constant coefficient $f_0$, let us denote the monic reciprocal polynomial by $\tilde{f}(x)$. Recall that
$$\tilde{f}(x)= f_0^{-1}x^{\deg f}f(x^{-1}).$$
For a cyclic code $C$ as above, it is well-known that the (Euclidean) dual cyclic code has the generating polynomial $\tilde{h}(x)$, where $h(x)\in \F_Q[x]$ is the polynomial that satisfies $x^{\ell}-1=g(x)h(x)$. Note that the conjugate and the monic reciprocal of $x^\ell -1$ are again $x^\ell -1$. Hence, we have
$$x^{\ell}-1=\bar{g}(x)\bar{h}(x)=\tilde{\bar{g}}(x)\tilde{\bar{h}}(x).$$
In particular, we have
$$\bar{C}^{\bot_e}=\langle \tilde{\bar{h}}(x) \rangle .$$

The following generalizes the results in \cite{YM} to the Hermitian setting.
\begin{thm}\label{reverse1}
With the notation so far, $C=\langle g(x) \rangle$ is an LCD cyclic code with respect to Hermitian inner product if and only if $\gcd \left(g(x),\tilde{\bar{h}}(x)\right)=1$. In particular, $C$ is Hermitian LCD if and only if $g(x)$ is conjugate-self-reciprocal (i.e. $g(x)=\tilde{\bar{g}}(x)$) and all the irreducible factors of $g(x)$ have the same multiplicity in $g(x)$ and in $x^\ell -1$.
\end{thm}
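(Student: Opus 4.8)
The plan is to reduce the Hermitian LCD condition to a coprimality statement about generator polynomials via Lemma~\ref{hermit1}, and then to read off the explicit characterization from the factorization of $x^\ell-1$ dictated by $\ell=\tilde{\ell}p^e$.

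First I would prove the coprimality criterion. By Lemma~\ref{hermit1}, $C$ is Hermitian LCD if and only if $C\cap(\bar{C})^{\bot_e}=\{0\}$, and we have already recorded that $(\bar{C})^{\bot_e}=\langle\tilde{\bar{h}}(x)\rangle$. For two cyclic codes of length $\ell$ the intersection of $\langle g\rangle$ and $\langle\tilde{\bar{h}}\rangle$ is the cyclic code $\langle\lcm(g,\tilde{\bar{h}})\rangle$, which is trivial exactly when $\lcm(g,\tilde{\bar{h}})=x^\ell-1$. Both $g$ and $\tilde{\bar{h}}$ divide $x^\ell-1$ (for $\tilde{\bar{h}}$ this is the relation $x^\ell-1=\tilde{\bar{g}}\,\tilde{\bar{h}}$ noted before the theorem), so $\lcm(g,\tilde{\bar{h}})$ divides $x^\ell-1$; comparing degrees through $\deg g+\deg\tilde{\bar{h}}=(\ell-k)+k=\ell$ together with $\deg\lcm(g,\tilde{\bar{h}})+\deg\gcd(g,\tilde{\bar{h}})=\deg g+\deg\tilde{\bar{h}}$, the lcm equals $x^\ell-1$ if and only if $\deg\gcd(g,\tilde{\bar{h}})=0$, i.e. $\gcd(g,\tilde{\bar{h}})=1$. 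This settles the first assertion.

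For the explicit description I would exploit $\ell=\tilde{\ell}p^e$, which forces $x^\ell-1=(x^{\tilde{\ell}}-1)^{p^e}$ with $x^{\tilde{\ell}}-1$ separable, so $x^\ell-1=\prod_i f_i^{p^e}$ for distinct monic irreducibles $f_i$, each of multiplicity exactly $p^e$. The conjugate-reciprocal map $f\mapsto\tilde{\bar{f}}$ fixes $x^\ell-1$ and squares to the identity, hence acts as an involution $\sigma$ on $\{f_i\}$ with $\tilde{\bar{f_i}}=f_{\sigma(i)}$ and $\sigma=\sigma^{-1}$. Writing $g=\prod_i f_i^{a_i}$ with $0\le a_i\le p^e$, I get $h=\prod_i f_i^{p^e-a_i}$ and therefore $\tilde{\bar{h}}=\prod_i f_{\sigma(i)}^{p^e-a_i}$, so the multiplicity of $f_j$ in $\tilde{\bar{h}}$ is $p^e-a_{\sigma(j)}$. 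Consequently the multiplicity of $f_j$ in $\gcd(g,\tilde{\bar{h}})$ is $\min\!\left(a_j,\,p^e-a_{\sigma(j)}\right)$, and $\gcd(g,\tilde{\bar{h}})=1$ holds precisely when for every $j$ either $a_j=0$ or $a_{\sigma(j)}=p^e$. Analyzing $\sigma$-orbits then finishes the argument: on a fixed point this reads $a_j\in\{0,p^e\}$, while on a transposition $\{j,j'\}$ the two constraints simultaneously force $(a_j,a_{j'})\in\{(0,0),(p^e,p^e)\}$. In all cases the condition is equivalent to ``$a_j\in\{0,p^e\}$ for all $j$ and $a_j=a_{\sigma(j)}$ for all $j$''. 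Since $\tilde{\bar{g}}=\prod_j f_j^{a_{\sigma(j)}}$, the requirement $a_j=a_{\sigma(j)}$ is exactly $g=\tilde{\bar{g}}$, and $a_j\in\{0,p^e\}$ says exactly that every irreducible factor of $g$ occurs with its full multiplicity $p^e$ coming from $x^\ell-1$; this is the stated characterization.

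The first equivalence is routine; the real care lies in the orbit analysis. The delicate step is verifying that on a transposition the two coprimality constraints can be met only at the two extreme pairs, so that mixed exponents $0<a_j<p^e$ are genuinely excluded, and then checking that the combined condition is \emph{equivalent} to the conjunction ``$g=\tilde{\bar{g}}$ and full multiplicities'' rather than merely implied by it. Keeping the involution $\sigma$ and the identity $\sigma=\sigma^{-1}$ straight throughout the bookkeeping is the main source of potential error.
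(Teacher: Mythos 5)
Your proof is correct and takes essentially the same route as the paper: the first equivalence is argued identically (the intersection $C\cap\bar{C}^{\bot_e}$ is the cyclic code generated by $\lcm\left(g,\tilde{\bar{h}}\right)$, and the degree count forces $\gcd\left(g,\tilde{\bar{h}}\right)=1$). For the second assertion the paper merely states that it ``easily follows,'' while you supply the full details via the orbit analysis of the conjugate-reciprocal involution on the irreducible factors of $x^\ell-1$; that analysis, including the transposition case ruling out mixed exponents, is sound.
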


\begin{proof}
It is observed above that
$$x^{\ell}-1=g(x)h(x)=\tilde{\bar{g}}(x)\tilde{\bar{h}}(x).$$
Intersection of two cyclic codes is again a cyclic code and we have
$$C\cap C^{\bot_h}=C\cap \bar{C}^{\bot_e}=\langle \lcm \left( g(x),\tilde{\bar{h}}(x)\right) \rangle.$$
This intersection is trivial (i.e. $C$ is Hermitian LCD) if and only if $\lcm \left( g(x),\tilde{\bar{h}}(x)\right) =x^\ell -1$. If $k$ is the degree of $g(x)$, then $\ell -k$ is the degree of $\tilde{\bar{h}}(x)$. Hence, for the equality involving the least common multiple to hold, we must have $\gcd \left(g(x),\tilde{\bar{h}}(x)\right)=1$. The remaining assertions in the statement easily follow.
\end{proof}

A length $\ell$ linear code over $\F_Q$ is called reversible if $(c_0,\ldots ,c_{\ell -1})\in C$ implies that $(c_{\ell -1},\ldots ,c_0)\in C$. It was shown in \cite{M1} that a cyclic code is reversible if and only if its generating polynomial is self-reciprocal. When $\gcd (\ell , Q)=1$, reversibility of $C$ is equivalent to $C$ being Euclidean LCD (\cite{YM}). Next, we prove the analogous result for cyclic LCD codes with respect to Hermitian inner product.

\begin{defn}\label{conjreverse}
A length $\ell$ linear code $C$ over $\F_Q$ is called conjugate-reversible if $(c_0,\ldots ,c_{\ell -1})\in C$ implies that $(\bar{c}_{\ell -1},\ldots ,\bar{c}_0)\in C$.
\end{defn}

As before, let $C=\langle g(x) \rangle$ be a length $\ell$ cyclic code over $\F_Q$ and $(c_0,\ldots ,c_{\ell -1})$ be a codeword of $C$. Note that the polynomial representation of this codeword in $\F_Q[x]/\langle x^\ell -1 \rangle$ is
$$c(x)=c_0+c_1x+\cdots +c_{\ell -1}x^{\ell -1}.$$
Let us denote the polynomial representation of the conjugate-reverse $(\bar{c}_{\ell -1},\ldots ,\bar{c}_0)$ of this codeword by $\bar{c}_R(x)$ and observe that
$$\bar{c}_R(x)=x^{\ell -1} \bar{c}(x^{-1}).$$
Assume that $\deg g =k$ and let $g_0$ be the nonzero constant coefficient of $g(x)$. Suppose $c(x)=g(x)u(x)$ for some polynomial $u(x)\in \F_Q[x]$ of degree less than $\ell -k$. Then we have
\begin{eqnarray*}
\bar{c}_R(x) & = & x^{\ell -1}\bar{g}(x^{-1})\bar{u}(x^{-1})\\
& = & \left(\bar{g}_0^{-1}x^k \bar{g}(x^{-1}) \right) \left(\bar{g}_0 x^{\ell - k-1} \bar{u}(x^{-1}) \right) \\
& = & \tilde{\bar{g}}(x) \left(\bar{g}_0 x^{\ell - k-1} \bar{u}(x^{-1}) \right).
\end{eqnarray*}
Note that the map which sends $u(x)$ to $\left(\bar{g}_0 x^{\ell - k-1} \bar{u}(x^{-1}) \right)$ is a bijection on the set of polynomials in $\F_Q[x]$ of degree less than $\ell -1$. This implies that the code
$$\bar{C}_R:=\left\{\bar{c}_R(x): c(x)\in C \right\}$$
is a cyclic code of length $\ell$ with the generating polynomial $\tilde{\bar{g}}(x)$. Hence, $C=\bar{C}_R$ if and only if $g(x)=\tilde{\bar{g}}(x)$. This proves the following characterization of Hermitian LCD cyclic codes.

\begin{thm}\label{hermit4}
A cyclic code over $\F_Q$ of length $\ell$, where $\gcd(\ell , Q)=1$, is Hermitian LCD if and only if $C$ is conjugate-reversible.
\end{thm}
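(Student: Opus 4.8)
The plan is to establish the equivalence by connecting the condition $g(x)=\tilde{\bar{g}}(x)$ to both the Hermitian LCD property (via Theorem \ref{reverse1}) and the conjugate-reversible property (via the computation preceding the statement). Indeed, the excerpt has already done the heavy lifting: the displayed calculation shows that $\bar{C}_R = \langle \tilde{\bar{g}}(x) \rangle$, from which we read off that $C = \bar{C}_R$ if and only if $g(x) = \tilde{\bar{g}}(x)$, since two cyclic codes of the same length coincide precisely when their monic generating polynomials agree. So the bulk of the argument is to verify that conjugate-reversibility of $C$ (Definition \ref{conjreverse}) is the same as the equality $C = \bar{C}_R$.

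First I would observe that $\bar{C}_R$, as defined, is exactly the set of polynomial representations of the conjugate-reversed codewords $(\bar{c}_{\ell-1},\ldots,\bar{c}_0)$ ranging over $(c_0,\ldots,c_{\ell-1})\in C$. By Definition \ref{conjreverse}, $C$ is conjugate-reversible if and only if every such conjugate-reverse lies in $C$, i.e. $\bar{C}_R \subseteq C$. Since conjugation and reversal are both length-preserving bijections on $\F_Q^\ell$, the map $c \mapsto \bar{c}_R$ is an involution (up to the bijectivity already noted in the excerpt), so $\dim \bar{C}_R = \dim C$. Therefore the inclusion $\bar{C}_R \subseteq C$ forces equality $\bar{C}_R = C$. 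Conversely $\bar{C}_R = C$ trivially gives $\bar{C}_R \subseteq C$, which is conjugate-reversibility. This upgrades the one-sided membership condition into the clean equality needed to invoke the generating-polynomial comparison.

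Combining these threads: $C$ is conjugate-reversible $\iff C = \bar{C}_R \iff g(x) = \tilde{\bar{g}}(x)$, where the last equivalence is the conclusion of the pre-statement computation. On the other side, Theorem \ref{reverse1} asserts that under the standing hypothesis $\gcd(\ell,Q)=1$ (which makes $x^\ell-1$ squarefree, so every irreducible factor of $g$ has multiplicity one in both $g$ and $x^\ell-1$), $C$ is Hermitian LCD if and only if $g(x)$ is conjugate-self-reciprocal, i.e. $g(x)=\tilde{\bar{g}}(x)$. Chaining these two characterizations through the common condition $g(x)=\tilde{\bar{g}}(x)$ yields the desired equivalence between Hermitian LCD and conjugate-reversible.

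The main obstacle, though largely defused by the preliminary computation, is the multiplicity clause in Theorem \ref{reverse1}: one must be careful that the equivalence "Hermitian LCD $\iff g=\tilde{\bar{g}}$" holds cleanly only once the squarefreeness of $x^\ell-1$ is invoked. This is precisely why the hypothesis $\gcd(\ell,Q)=1$ appears in the statement; I would make this dependence explicit, noting that it is the condition $p \nmid \tilde{\ell}$ (with $\ell = \tilde{\ell}p^e$ forcing $e=0$) that guarantees $x^\ell - 1$ is separable and hence removes the multiplicity obstruction entirely. With that hypothesis the second condition in Theorem \ref{reverse1} is automatic, leaving only the conjugate-self-reciprocity of $g$, which is exactly the criterion we matched to conjugate-reversibility.
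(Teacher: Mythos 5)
Your proposal is correct and follows essentially the same route as the paper: the displayed computation identifying $\bar{C}_R$ as the cyclic code generated by $\tilde{\bar{g}}(x)$, hence $C=\bar{C}_R$ iff $g=\tilde{\bar{g}}$, chained with Theorem \ref{reverse1} under the hypothesis $\gcd(\ell,Q)=1$ which makes $x^\ell-1$ separable and the multiplicity clause vacuous. Your added dimension argument upgrading the inclusion $\bar{C}_R\subseteq C$ to equality is a small detail the paper leaves implicit, and it is correct.
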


Characterizations of Euclidean LCD cyclic codes (\cite{YM}) and Hermitian LCD cyclic codes (Theorem \ref{hermit4}) yield a characterization of LCD 2D cyclic codes. Let us recall that a cyclic $R$-linear code of length $\ell$ (i.e. an $R$-submodule of $R^\ell$ which is closed under cyclic shift) is called a 2D cyclic code. So, when viewed as $\F_q$-linear codes, 2D cyclic codes are length $m\ell$, index $\ell$ QC codes with extra structure (see \cite{GO2} for further information). Therefore one can also decompose a 2D cyclic code into constituents (outer codes). It has been shown that the constituents of a 2D cyclic code ($C_i,C_j',C_j''$ in (\ref{CRT2})) are cyclic codes of length $\ell$ over their fields of definition (\cite[Theorem 3.5]{GO2}). Hence, Theorem \ref{CDcriteria} and Corollary \ref{CDinstance} apply to 2D cyclic codes. Combining these with the results on Euclidean and Hermitian LCD cyclic codes, we obtain the following.

\begin{thm}\label{2Dcyclic}
Let $C$ be a 2D cyclic code with a decomposition as in (\ref{CRT2}) and assume that both $m$ and $\ell$ are relatively prime to $q$. Assume that $C_i$ is a conjugate-reversible code over $G_i$ for all $1\leq i \leq s$ and $C_j'=C_j''$ are reversible codes over $H_j'=H_j''$ for all $1\leq j \leq t$. Then $C$ is an LCD 2D cyclic code.
\end{thm}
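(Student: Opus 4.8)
The plan is to reduce the Euclidean LCD property of the 2D cyclic code $C$ to LCD-type conditions on its constituents, and then to translate the two reversibility hypotheses into exactly those conditions using the cyclic characterizations already established. The decisive tool is Corollary \ref{CDinstance}: it guarantees that $C$ is QCCD (hence Euclidean LCD) as soon as each $C_i$ is Hermitian LCD and each pair satisfies $C_j'=C_j''$ with these codes Euclidean LCD. So the whole proof amounts to verifying those two hypotheses of the corollary from the reversibility assumptions.

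First I would recall that, by \cite[Theorem 3.5]{GO2}, the constituents $C_i,C_j',C_j''$ appearing in the decomposition (\ref{CRT2}) of a 2D cyclic code are themselves cyclic codes of length $\ell$ over their respective fields $G_i,H_j',H_j''$. This is what permits the cyclic characterizations (Theorem \ref{hermit4} and the Yang--Massey result of \cite{YM}) to be applied constituent by constituent. Next I would check the coprimality hypotheses those characterizations require: since each of $G_i,H_j',H_j''$ is a finite extension of $\F_q$, its order is a power of $q$, so the assumption $\gcd(\ell,q)=1$ immediately yields $\gcd(\ell,|G_i|)=\gcd(\ell,|H_j'|)=\gcd(\ell,|H_j''|)=1$.

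With the coprimality in place the translation is routine. For each self-reciprocal factor $g_i$ the field $G_i$ is an even-degree extension of $\F_q$ (as noted before Lemma \ref{hermit1}), so $G_i=\F_Q$ carries the Hermitian inner product and Theorem \ref{hermit4} applies: the hypothesis that $C_i$ is conjugate-reversible is precisely equivalent to $C_i$ being Hermitian LCD. For each conjugate pair, the Yang--Massey characterization \cite{YM} states that a cyclic code whose length is coprime to the field order is Euclidean LCD exactly when it is reversible; hence the hypothesis that $C_j'=C_j''$ are reversible makes these codes Euclidean LCD over the common field $H_j'=H_j''$.

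Having shown that every $C_i$ is Hermitian LCD and that each $C_j'=C_j''$ is Euclidean LCD, I would invoke Corollary \ref{CDinstance} to conclude that $C$ is QCCD, i.e.\ $C\cap C^{\bot}=\{0\}$ for the Euclidean dual. Since $C$ was assumed to be a 2D cyclic code, this is exactly the assertion that $C$ is an LCD 2D cyclic code. I do not expect a genuine obstacle here: the content lies entirely in assembling the earlier results, and the only points demanding care are confirming that the coprimality conditions descend from $\F_q$ to each constituent field and that the even-degree requirement needed for the Hermitian inner product is met for the self-reciprocal factors $g_i$.
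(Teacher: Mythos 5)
Your argument is correct and is essentially the paper's own: the paper gives no formal proof but justifies the theorem in the preceding paragraph exactly as you do, citing that the constituents of a 2D cyclic code are cyclic, applying Corollary \ref{CDinstance}, and translating reversibility and conjugate-reversibility into Euclidean and Hermitian LCD via \cite{YM} and Theorem \ref{hermit4}. Your explicit checks of the coprimality descent and the even-degree condition for the $G_i$ are welcome details the paper leaves implicit.
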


\begin{rem}\label{nD}
Multidimensional ($n$D) cyclic codes have been studied in the literature also for $n>2$ (see \cite{GO08,SH}). Multidimensional versions of QC codes were recently introduced (\cite{O}). Although their original definitions would require lengthy introduction and notation, both classes of codes can be viewed as QC codes and characterized by their constituents. Recall from above that a 2D cyclic code is a QC code whose constituents are not just linear but cyclic. Recursively, we can define an $n$D cyclic code as a QC code whose constituents are $(n-1)$D cyclic codes, for all $n\geq 2$ (cf. \cite[Theorem 4.3]{GO}). Analogously, a quasi-2D-cyclic code (Q2DC) is a QC code whose constituents are also QC and recursively, one can define a Q$n$DC code as a code with Q$(n-1)$DC constituents (\cite[Theorem 2.3.1]{O}). In short, one can say that an $n$D cyclic codes are generated from cyclic codes and Q$n$DC codes are generated from linear codes, through suitable concatenations (as described in Section \ref{background}). Hence, a natural problem to attempt is to formulate conditions on LCD codes in both families, such as the one for 2D cyclic codes in Theorem \ref{2Dcyclic}.
\end{rem}

\section{Constructions and Examples} \label{examples}

Direct sum of two LCD codes is again an LCD code. According to \cite{CG}, this is the only known construction of an LCD code out of other LCD codes. We will provide constructions and examples of LCD codes in this section. We use the computational algebra system Magma \cite{BCP} in the examples.

\subsection{Construction from the Constituents}
Theorem \ref{CDcriteria} and Corolary \ref{CDinstance} describe the characterization of QCCD codes based on Euclidean and Hermitian LCD codes.
These characterizations have already been used for showing the existence of asymptotically good QCCD codes (Section \ref{results}).
Another application is seen in Theorem \ref{2Dcyclic} where LCD 2D cyclic codes are described.
Here, we will provide examples of LCD codes based on the construction from constituents.
Random search in Magma is carried out for linear codes over various
extensions of $\F_2$, which satisfy the condition in Theorem 3.1.
The numbers in parentheses are the optimal distances. All the codes
are
of dimension $m$.\\

$$\begin{array}{|c|c|c|c|c|c|c|c|c|}
  \hline
  \mathbf{m/\ell} & \mathbf{3} & \mathbf{5} & \mathbf{7} & \mathbf{9} & \mathbf{11} & \mathbf{13} & \mathbf{15} & \mathbf{17} \\\hline
  \mathbf{2} & 2(3) & 3(4) & 4(4) & 4(6) & 5(7) & 6(7) & 6(8) & 6(8) \\\hline
  \mathbf{3} & 3(4) & 5(7) & 7(8) & 8(10) & 9(12) & 10(12) & 11(14) &  \\\hline
  \mathbf{5} & 7(8) & 10(12) & 13(16) & 15(18) &  &  &  &  \\\hline
  \mathbf{7} & 10(12) &  &  &  &  &  &  &  \\
  \hline
\end{array}$$\\

\subsection{Double Circulant Codes}

Let $R=\F_q[x]/\langle x^m-1\rangle$ as before and let $C=\left\langle \bigl(1,a(x)\bigr) \right\rangle \subset R^2$ be a systematic double circulant code in this section.

\begin{thm}\label{DC}
$C=\left\langle \bigl(1,a(x)\bigr) \right\rangle$ is LCD if and only if $\gcd \left(a(x)a(x^{m-1})+1,x^m-1\right)=1$.
\end{thm}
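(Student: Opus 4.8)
The plan is to reduce the statement to the Euclidean analogue of Proposition \ref{hermit2}, i.e. Massey's criterion \cite[Proposition 1]{M}: a code with $\F_q$-generator matrix $G$ is (Euclidean) LCD if and only if $GG^T$ is nonsingular. Viewing $C=\langle(1,a(x))\rangle$ as an $\F_q$-linear code of length $2m$, a natural generator matrix is $G=[I_m \mid A]$, where $A$ is the $m\times m$ circulant matrix realizing multiplication by $a(x)$ on $R=\F_q[x]/\langle x^m-1\rangle$. Then $GG^T=I_m+AA^T$, so the entire question becomes: for which $a(x)$ is the $m\times m$ matrix $I_m+AA^T$ invertible?

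The heart of the argument is to recognize $I_m+AA^T$ as a single circulant and to read off its invertibility. The correspondence sending a circulant to the element of $R$ it represents is a ring isomorphism that carries matrix product to polynomial product modulo $x^m-1$, and under it transposition corresponds to the substitution $x\mapsto x^{-1}\equiv x^{m-1}$. Hence $A^T$ is the circulant of $a(x^{m-1})$, the product $AA^T$ is the circulant of $a(x)a(x^{m-1}) \bmod (x^m-1)$, and $I_m+AA^T$ is the circulant of $F(x):=a(x)a(x^{m-1})+1$. But the circulant of $F(x)$ is exactly the multiplication-by-$F(x)$ operator on $R$, which is bijective precisely when $F(x)$ is a unit of $R$, that is, when $\gcd\!\left(a(x)a(x^{m-1})+1,\ x^m-1\right)=1$. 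Chaining the three equivalences (LCD $\Leftrightarrow$ $GG^T$ nonsingular $\Leftrightarrow$ $F(x)$ a unit $\Leftrightarrow$ the stated gcd condition) gives the result, and notably requires no coprimality hypothesis on $m$ and $q$.

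The step I expect to require the most care is this circulant bookkeeping: verifying that transposing the circulant of $a(x)$ yields the circulant of $a(x^{m-1})$, and that the circulant-to-$R$ dictionary is a genuine ring isomorphism so that $I_m+AA^T$ really is multiplication by $F(x)$. Once that dictionary is pinned down, the invertibility-versus-gcd equivalence is the familiar fact that an element of $R$ is a unit iff it is coprime to $x^m-1$, and nothing further is needed.

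As an alternative consistent with the rest of the paper, one could instead route through the constituent characterization of Theorem \ref{CDcriteria} under the assumption $\gcd(m,q)=1$. Each constituent of the one-generator code is the one-dimensional code spanned by $(1,a(\beta))$ as $\beta$ runs over the roots of $x^m-1$. For a self-reciprocal factor $g_i$ one has $\overline{a(\xi^{u_i})}=a(\xi^{-u_i})$, since conjugation by $\sqrt{Q}$ acts as $\beta\mapsto\beta^{-1}$ on the roots of a self-reciprocal polynomial, so Proposition \ref{hermit2} demands $1+a(\xi^{u_i})a(\xi^{-u_i})\neq 0$; for a reciprocal pair $h_j,h_j^*$ the two intersection conditions each reduce to $1+a(\xi^{\pm v_j})a(\xi^{\mp v_j})\neq 0$. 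As the exponents $u_i$ and $\pm v_j$ exhaust all $m$th roots of unity, these conditions jointly say that $F(x)=a(x)a(x^{m-1})+1$ vanishes at no root of $x^m-1$, again $\gcd(F(x),x^m-1)=1$. Here the delicate point is the identification $\overline{a(\xi^{u_i})}=a(\xi^{-u_i})$ together with the bookkeeping that the exponents account for every $m$th root of unity; I would favor the direct circulant argument precisely because it sidesteps these subtleties and holds for arbitrary $m$.
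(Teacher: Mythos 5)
Your main argument is correct, but it takes a genuinely different route from the paper. The paper proves Theorem \ref{DC} exactly along the lines of your ``alternative'': it applies the constituent criterion of Theorem \ref{CDcriteria}, observes that each constituent is the one-dimensional span of $\bigl(1,a(\beta)\bigr)$ inside a two-dimensional ambient space (so each intersection with the relevant dual is either trivial or an equality of lines), and reduces each condition to $1+a(\xi^{u_i})a(\xi^{-u_i})\neq 0$, resp.\ $1+a(\xi^{v_j})a(\xi^{-v_j})\neq 0$, which together say that $a(x)a(x^{m-1})+1$ vanishes at no $m$th root of unity. Your preferred route instead writes the generator matrix $G=[I_m\,|\,A]$ with $A$ the circulant of $a(x)$, invokes Massey's criterion that $C$ is LCD iff $GG^T=I_m+AA^T$ is nonsingular, and identifies $I_m+AA^T$ with multiplication by $F(x)=a(x)a(x^{m-1})+1$ on $R$ via the circulant-to-$R$ ring isomorphism (under which transposition is $x\mapsto x^{-1}$); invertibility is then the unit condition $\gcd(F,x^m-1)=1$. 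Both chains of equivalences are sound, and the block-versus-interleaved coordinate ordering is harmless since LCD-ness is permutation invariant. What your route buys is exactly what you claim: it is elementary and needs no hypothesis $\gcd(m,q)=1$, whereas the paper's CRT decomposition (\ref{factors})--(\ref{CRT3}) presupposes the semisimple case; what the paper's route buys is that it sits inside the general constituent machinery and so extends directly to higher index and to several generators, where a single circulant identity no longer suffices. Your alternative sketch matches the paper, including the identification $\overline{a(\xi^{u_i})}=a(\xi^{-u_i})$, which is precisely how the Hermitian form (\ref{hermprod}) is defined on the self-reciprocal constituents.
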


\begin{proof}
Let $\xi$ denote a primitive $m^{th}$ root of unity and assume that $x^m-1$ factors as in (\ref{factors}). Inherit all the notation from Section \ref{background}. In particular, by (\ref{consts}), the constituents of $C$ are
\begin{eqnarray} \label{consts2}
C_i & = & \Span_{G_i}\bigl\{\bigl(1,a(\xi^{u_i})\bigr): 1\leq b \leq r \bigr\}, \ \mbox{for
$1\leq i \leq s$}, \nonumber \\
C_j' & = &  \Span_{H_j'}\bigl\{\bigl(1,a(\xi^{v_j})\bigr): 1\leq b \leq r \bigr\}, \ \mbox{for
$1\leq j \leq t$}, \\
C_j'' & = &  \Span_{H_j''}\bigl\{\bigl(1,a(\xi^{-v_j})\bigr): 1\leq b \leq r \bigr\}, \ \mbox{for
$1\leq j \leq t$} .\nonumber
\end{eqnarray}
Note that each constituent is a 1-dimensional space over the two dimensional ambient space it lives in. Hence, duals of constituents are all 1-dimensional too. Therefore intersections we want to check (cf. Theorem \ref{CDcriteria}) are either trivial or 1-dimensional.

$C_i\cap C_i^{\perp_h}\not= \{0\}$ if and only if $C_i=C_i^{\perp_h}$, which is equivalent to
$$1+a(\xi^{u_i})a(\xi^{-u_i})=0 \ \ \ \mbox{(cf. (\ref{hermprod}))}.$$
On the other hand, $C_j' \cap C_j''^{\perp_e}\not=0$ if and only if $C_j'=C_j''^{\perp_e}$, which is equivalent to
$$1+a(\xi^{v_j})a(\xi^{-v_j})=0.$$
The last intersection $C_j'' \cap C_j'^{\perp_e}$ does not bring a new condition. Hence, being LCD for $C$ is equivalent to the polynomial $a(x)a(x^{m-1})+1$ not vanishing at any $m^{th}$ root of unity.
\end{proof}

The following table presents the best possible distances for double
circulant binary LCD codes $C=\left\langle \bigl(1,a(x)\bigr)
\right\rangle \subset R^2$, where $C$ is QC of length $2m$,
dimension $m$ and index 2. The search is done in Magma for random
$a(x) \in R$ satisfying the condition in Theorem 5.5 and the ones
marked with ``*"
are optimal or with best-known parameters.\\

$$\begin{array}{|c|c|c|c|c|c|c|c|c|}
  \hline
 \mathbf{ m} & \mathbf{3} & \mathbf{5} & \mathbf{7} & \mathbf{9} & \mathbf{11} & \mathbf{13} & \mathbf{15} & \mathbf{17} \\\hline
  \mathbf{d} & 1 & 3 & 4* & 3 & 6 & 7* & 5 & 8* \\\hline
  \mathbf{d*} & 3 & 4 & 4 & 6 & 7 & 7 & 8 & 8 \\
  \hline
\end{array}$$\\

For instance, using the code denoted in Magma notation by
{\tt QuasiCyclicCode(10,[1,1+x+$x^3$]) }
we obtain a $[10,5,3]$ binary code. Note that the best linear $[10,5]$ codes have parameters $[10,5,4].$ But the one obtained from the Magma BKLC command is not LCD.


\subsection{Subfield Construction}

Let $\mathcal{B}=\{\beta_1,\ldots ,\beta_\ell\}$ be a self-dual basis of $\F_{q^\ell}$ over $\F_q$. If $\Tr$ denotes the trace map from $\F_{q^\ell}$ to $\F_q$, this means that
$$\Tr(\beta_i \beta_j )= \left\{ \begin{array}{ll} 0 & \mbox{if $i\not= j$} \\ 1 & \mbox{if $i=j$} \end{array}\right. .$$
Note that a self-dual basis exists if and only if $q$ is even or both $q$ and $\ell$ are odd (\cite{SL}).

For an element $x\in \F_{q^\ell}$, let us denote the coordinates relative to $\mathcal{B}$ as $\vec{x}_{\mathcal{B}}=(x_1,\ldots ,x_\ell)\in \F_q^\ell$ (i.e. $x=\sum_i x_i\beta_i$). For $x,y\in \F_{q^\ell}$, we have
\begin{equation}\label{id-1}
\Tr(xy)=\vec{x}_{\mathcal{B}}\cdot \vec{y}_{\mathcal{B}},
\end{equation}
where the operation on the right hand side is the Euclidean inner product on $\F_q^\ell$.

Consider the $\F_q$-linear isomorphism
\begin{eqnarray*}
\varphi : \F_{q^\ell}^m & \longrightarrow \F_{q}^{m\ell} \\
(c_1,\ldots ,c_m) & \mapsto & \left(\begin{array}{ccc} c_{11} & \cdots & c_{1\ell} \\ \vdots & & \vdots \\  c_{m1} & \cdots & c_{m\ell}  \end{array} \right),
\end{eqnarray*}
where the $i^{th}$ row in the image consists of the $\mathcal{B}$-coordinates of $c_i \in \F_{q^\ell}$ (for all $1\leq i \leq m$). For $\vec{a}=(a_1,\ldots ,a_m), \vec{b}=(b_1,\ldots ,b_m) \in \F_{q^\ell}^m$, we observe (using (\ref{id-1})) that
\begin{equation}\label{id-2}
\Tr(\vec{a}\cdot \vec{b})=\varphi(\vec{a}) \cdot \varphi (\vec{b}) .
\end{equation}
Here, the operations on the left and right sides of the equation are Euclidean inner products in $\F_{q^\ell}^m$ and $\F_q^{m\ell}$, respectively.

\begin{thm}\label{subfield}
A linear code $C\subset \F_{q^\ell}^m$ is LCD if and only if the linear code $\varphi(C) \subset \F_{q}^{m\ell}$ is LCD.
\end{thm}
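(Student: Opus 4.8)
The plan is to show that the $\F_q$-linear isomorphism $\varphi$ carries the Euclidean dual of $C$ (taken over $\F_{q^\ell}$) exactly onto the Euclidean dual of $\varphi(C)$ (taken over $\F_q$); that is, I aim to prove the single identity
$$\varphi\bigl(C^{\bot_e}\bigr)=\varphi(C)^{\bot_e}.$$
Once this is in hand, the theorem follows formally: since $\varphi$ is a bijection it commutes with intersection, so
$$\varphi\bigl(C\cap C^{\bot_e}\bigr)=\varphi(C)\cap \varphi\bigl(C^{\bot_e}\bigr)=\varphi(C)\cap \varphi(C)^{\bot_e},$$
and because $\varphi$ sends $\vec{0}$ to $\vec{0}$ and nothing else, $C\cap C^{\bot_e}=\{0\}$ holds if and only if $\varphi(C)\cap\varphi(C)^{\bot_e}=\{0\}$. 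This is exactly the statement that $C$ is LCD iff $\varphi(C)$ is LCD.

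To establish the boxed identity I would prove the two inclusions separately. The inclusion $\varphi(C^{\bot_e})\subseteq \varphi(C)^{\bot_e}$ is the easy direction: if $\vec{b}\in C^{\bot_e}$ then $\vec{a}\cdot\vec{b}=0$ for every $\vec{a}\in C$, hence $\Tr(\vec{a}\cdot\vec{b})=\Tr(0)=0$, and identity (\ref{id-2}) rewrites this as $\varphi(\vec{a})\cdot\varphi(\vec{b})=0$ for all $\vec{a}\in C$, i.e.\ $\varphi(\vec{b})\in\varphi(C)^{\bot_e}$. The reverse inclusion is the crux, and here I would exploit that $C$ is an $\F_{q^\ell}$-subspace, not merely an $\F_q$-subspace. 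Given $\vec{c}\in\varphi(C)^{\bot_e}$, write $\vec{c}=\varphi(\vec{b})$ by surjectivity of $\varphi$; then $\Tr(\vec{a}\cdot\vec{b})=\varphi(\vec{a})\cdot\varphi(\vec{b})=0$ for every $\vec{a}\in C$. For a fixed $\vec{a}\in C$ and any $\lambda\in\F_{q^\ell}$, the vector $\lambda\vec{a}$ again lies in $C$, so $\Tr\bigl(\lambda(\vec{a}\cdot\vec{b})\bigr)=\Tr\bigl((\lambda\vec{a})\cdot\vec{b}\bigr)=0$ for all $\lambda$. Since the trace form of the separable extension $\F_{q^\ell}/\F_q$ is nondegenerate (indeed, the existence of the self-dual basis $\mathcal{B}$ already reflects this), $\Tr(\lambda z)=0$ for all $\lambda$ forces $z=0$; hence $\vec{a}\cdot\vec{b}=0$. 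As $\vec{a}\in C$ was arbitrary, $\vec{b}\in C^{\bot_e}$ and $\vec{c}\in\varphi(C^{\bot_e})$.

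The main obstacle is precisely this reverse inclusion: the trace-compatibility (\ref{id-2}) only transfers \emph{Euclidean orthogonality over $\F_q$} into the vanishing of a \emph{single} trace over $\F_{q^\ell}$, which is weaker than genuine $\F_{q^\ell}$-orthogonality. The device that bridges the gap is to feed all scalar multiples $\lambda\vec{a}$ (available because $C$ is $\F_{q^\ell}$-linear) into the trace and invoke nondegeneracy of the trace form to upgrade ``trace zero for all $\lambda$'' to ``inner product zero.'' As an alternative to nondegeneracy, one could instead note that both $\varphi(C^{\bot_e})$ and $\varphi(C)^{\bot_e}$ have $\F_q$-dimension $(m-k)\ell$ when $\dim_{\F_{q^\ell}}C=k$, so the forward inclusion plus this dimension count already yields equality; but the trace argument above is self-contained and I would favour it.
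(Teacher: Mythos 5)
Your proposal is correct and follows essentially the same route as the paper: both reduce the theorem to the identity $\varphi(C^{\bot_e})=\varphi(C)^{\bot_e}$ and prove the inclusion $\varphi(C^{\bot_e})\subseteq\varphi(C)^{\bot_e}$ via (\ref{id-2}). The only divergence is in closing the reverse direction, where the paper uses the dimension count you mention as an alternative, while you give a direct argument from the $\F_{q^\ell}$-linearity of $C$ and the nondegeneracy of the trace form --- both are valid, and yours has the small merit of exhibiting explicitly why $\F_{q^\ell}$-linearity is needed.
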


\begin{proof}
We have $\varphi (C\cap C^{\bot_e})=\varphi (C) \cap \varphi (C^{\bot_e})$, since $\varphi$ is an isomorphism. We will show that $\varphi (C^{\bot_e})=\varphi (C)^{\bot_e}$, which will suffice for the proof.

If $\vec{x}\in C^{\bot_e}$, then for any $\vec{c}\in C$ we have
$$0=\Tr (\vec{x}\cdot \vec{c})=\varphi (\vec{x})\cdot \varphi (\vec{c}) \hspace{0.6cm}  \mbox{(cf. (\ref{id-2}))} .$$
Hence $\varphi (C^{\bot_e}) \subseteq \varphi(C)^{\bot_e}$. On the other hand, if $\dim_{\F_{q^\ell}}(C)=k$, then
$$\dim_{\F_q}(\varphi(C^{\bot_e}))=\dim_{\F_q}(C^{\bot_e})=\ell (m-k),$$
and
$$\dim_{\F_q}(\varphi(C)^{\bot_e})=m\ell -\dim_{\F_q}(\varphi(C))=m\ell -\dim_{\F_q}(C)=m\ell -mk.$$
Hence, $\varphi (C^{\bot_e})= \varphi(C)^{\bot_e}$.
\end{proof}

\begin{rem}
Note that if $C$ is a cyclic code of length $m$ over $\F_{q^\ell}$, then $\varphi(C)$ is a linear code over $\F_q$ of length $m\ell$, which is closed under shift of codewords by $\ell$ units.
\end{rem}

\begin{ex}
Let $C$ be the cyclic quaternary LCD code of parameters $[15,11,3],$ with generator polynomial $g=f \tilde{f},$ where
$f(x)=x^2+x+w,$ $\tilde{f}(x)=x^2+w^2x+w^2,$ the monic reciprocal of $f$ and $\F_4=\F_2(w).$ With these notations we have that $\varphi(C)$ is a binary  $[30,22,3]$ code that is LCD. There exists a $[30,22,4]$ code but we cannot find an LCD one.
\end{ex}



\end{document}